\newcommand{\E}{\mathbb{E}}
\def\Ez#1{\mathbb{E} \left[ #1 \right]}
\newcommand{\HC}{\mathcal{H}^\mathbb{C}}
\newcommand{\HR}{\mathcal{H}}
\newcommand{\LtwoC}{L^2([0,1],\mathbb{C})}
\newcommand{\LtwoR}{L^2([0,1],\mathbb{R})}
\DeclareMathOperator{\I}{i}
\DeclareMathOperator{\Id}{I}
\DeclareMathOperator{\rank}{rank}
\DeclareMathOperator{\tr}{tr}
\newcommand{\D}[1]{\ensuremath{\operatorname{d}\!{#1}}}
\def\FARMA{FARMA$(p,q)${}}
\def\FARFIMA{FARFIMA$(p,d,q)${}}
\newcommand\Item[1][]{%
  \ifx\relax#1\relax  \item \else \item[#1] \fi
  \abovedisplayskip=0pt\abovedisplayshortskip=0pt~\vspace*{-\baselineskip}}
\newtheorem{theorem}{Theorem}
\newtheorem{proposition}{Proposition}
\newcolumntype{P}[1]{>{\centering\arraybackslash}p{#1}}
\numberwithin{equation}{section}
\theoremstyle{plain}
\begin{document}

\begin{frontmatter}


\title{{\large Spectral Simulation of Functional Time Series}}


\runtitle{Spectral Simulation Functional Time Series}

\begin{aug}
\author{\fnms{Tom{\'a}{\v s}} \snm{Rub{\'i}n}\ead[label=e1]{tomas.rubin@epfl.ch}} \and
\author{\fnms{Victor M.} \snm{Panaretos}\ead[label=e2]{victor.panaretos@epfl.ch}}


\runauthor{T. Rub{\'i}n \& V.M. Panaretos}

\affiliation{Ecole Polytechnique F\'ed\'erale de Lausanne}

\address{Institut de Math\'ematiques\\
Ecole Polytechnique F\'ed\'erale de Lausanne\\
\printead{e1}, \printead*{e2}}

\end{aug}

\begin{abstract}
We develop methodology allowing to simulate a stationary functional time series defined by means of its spectral density operators. Our framework is general, in that it encompasses any such stationary functional time series, whether linear or not. The methodology manifests particularly significant computational gains if the spectral density operators are specified by means of their eigendecomposition or as a filtering of white noise. In the special case of linear processes, we determine the analytical expressions for the spectral density operators of functional autoregressive (fractionally integrated) moving average processes, and leverage these as part of our spectral approach, leading to substantial improvements over time-domain simulation methods in some cases. The methods are implemented as an \texttt{R} package (\texttt{specsimfts}) accompanied by several demo files that are easy to modify and can be easily used by researchers aiming to probe the finite-sample performance of their functional time series methodology by means of simulation.
\end{abstract}

\begin{keyword}
\kwd{functional data analysis}
\kwd{spectral density operator}
\kwd{Cram\'{er}-Karhunen-Lo\`{e}ve expansion}
\kwd{FARFIMA process}
\kwd{FARMA process}
\end{keyword}

\end{frontmatter}

\tableofcontents

\section{Introduction} \label{intro}

\textit{Functional data analysis} \citep{ramsay2013functional,horvath2012inference,ferraty2006nonparametric} considers statistical problems where the data and parameter spaces are comprised of \emph{functions} and \emph{operators}. The probabilistic models for such data/parameters usually involve notions of random elements in infinite dimensional Hilbert spaces and related (linear) operators, and their theoretical analysis involves many challenges deviating from those  typically encountered with multivariate analysis. Namely, the
analysis of infinite dimensional problems requires tools from functional analysis, while many standard inference problem become ill-posed.
A (temporal) sequence of functional random elements is then called a \textit{functional time series} and constitutes a probabilistic framework for scenarios where functions are collected sequentially and subject to dependencies. Examples of such data include daily profiles of meteorological variables \citep{hormann2010weakly,rubin2020sparsely}, traffic data \citep{klepsch2017prediction}, DNA strings dynamics \citet{tavakoli2016detecting}, or intra-day trading data \citep{cerovecki2019functional}.

The development of functional time series is historically started with the generalisation of univariate or multivariate time series models into infinite dimensions, and has evolved with gradual generalisation. Functional autoregressive (FAR) process was defined by \citet{bosq1999autoregressive,mas2007weak}, while prediction for functional moving average process (FMA) studied by \citet{chen2016functional}, and the two concepts were combined into the functional moving average process (FARMA) by \citet{klepsch2017prediction}. More recently, long-range dependence was incorporated into these models by \citet{li2019long} who defined functional autoregressive fractionally integrated moving average processes (FARFIMA). A detailed treatment of the foundations of linear functional process can be found in \citet{bosq2012linear}.

A different line of development in functional time series domain abandoned the linear processes structure, and investigated more general stationary sequences from the point of view of weak dependence. \citet{hormann2010weakly} studied weakly dependent data and studied the estimation of the long-run covariance operator and \citet{horvath2013estimation} established a central limit theorem for weakly dependent functional data. Additional univariate or multivariate methods have been adapted for the functional time series setting that serve for estimation, prediction, or testing problems \citep{aue2017estimating,aue2015prediction,aue2017functional,laurini2014dynamic,hormann2013functional,gorecki2018testing,gao2019high}.

Parallel to the time domain approaches, the statistical analysis of functional time series has been fruitful also in the spectral domain. The foundations for frequency domain methods were established in \citet{panaretos2013fourier}, while \citet{panaretos2013cramer} and \citet{hormann2015dynamic}  introduced dimension reduction techniques based on the harmonic/dynamic principal component analysis. The spectral domain tools have been successfully used to solve other problems, such as functional lagged regression \citep{hormann2015estimation,pham2018methodology,rubin2019functional}, stationarity testing \citet{horvath2014testing}, periodicity detection \citep{hormann2018testing}, two-sample testing \citet{tavakoli2016detecting}, and white noise testing \cite{zhang2016white}, to mention but a few. The spectral analysis of functional time series was generalised by the introduction of the notion of \textit{weak} spectral density operator \citep{tavakoli2014} that allows for the analysis of long-range dependent functional time series. Some spectral domain results for possibly long-range dependent Gaussian processes are established by \citet{ruiz2019spectral}.

Any methodological development in functional time series will be accompanied by a finite sample performance assessment of the novel method, given the complexity of the data involved.  Such simulations require the generation of functional time series with prescribed model dynamics. Despite many new methods being generally applicable to time series (whether linear or not), their assessments is carried out predominately on simulated data coming from FARMA processes, typically functional AR processes, because their simulation is straightforward in the time-domain by applying the autoregressive equation sequentially on white noise (or a moving average of white noise). In order to assess the applicability of a method beyond linear processes, however, one should aim to cover as broad as possible a range of possible functional time series dynamics (including nonlinear dynamics). This is especially true for methods that are not specific to linear processes but whose assumptions, theory, and implementaton are more generally valid. Indeed, many functional time series methods \citep{hormann2015dynamic,hormann2015estimation,zhang2016white,tavakoli2016detecting} rely on the eigendecomposition of spectral density operators (the harmonic/dynamic principal components) and present performance tradeoffs that are best captured by their spectral structure. It is thus beneficial to be able to simulate functional time series specified by means of their spectral density structure.

The objective of this article is to develop a general-purpose simulation method that is able to efficiently simulate stationary functional time series not restricted to the linear class. The approach is to use the spectral specification of such a time series, by means of its \emph{spectral density operator}. The general method, presented in Section~\ref{sec6:simulation_in_spectral_domain}, hinges on a discretisation and dimension reduction of the functional Cram\'er representation \citep{panaretos2013cramer}. It simulates an ensemble of independent complex random elements whose covariance operators match the designated spectral density operators, and transposes this ensemble into the time-domain by the means of the (inverse) fast Fourier transform. We show that this strategy is particularly effective when the series is defined by means of the eigendecomposition of its spectral density operator or by filtering a white noise, but consider various other specification scenarios, too. For FARMA and FARFIMA processes, in particular, we develop analytical expressions for their spectral density operators, and exploit these in conjunction with spectral methods. To our knowledge, the spectral density operators for these processes, while being infinite-dimensional analogues of the univariate/multivariate versions \citep{priestley1981spectral_1,priestley1981spectral_2}, have not yet been previously rigorously established in functional time series literature.

Our functional time series simulation method in the spectral domain is inspired in part by the methods for scalar and multivariate time series simulation. The original idea of simulating a signal in the spectral domain and converting it to the time-domain by the inverse fast Fourier transform seems to be due to \citet{thompson1973generation}. This approach was further explored by \citet{percival1993simulating} who reviewed some variants of the algorithm and addressed some practical implementation questions, and \citet{davies1987tests} used the method for simulation of fractionally integrated noise processes.
Furthermore, the simulation of multivariate time series with given spectral density matrices is due to \citet{chambres1995simulation}.
However, pushing the general ideas forward to functional time series is not a matter of simple generalisation of the multivariate time series simulation methods. The intrinsic infinite dimensionality of functional data calls for the approximate generation of infinite dimensional objects approximated in finite dimension, which requires optimally reducing dimension (which we implement either via the Karhunen-Lo\'eve or the Cram\'er-Karhunen-Lo\`eve representation \citep{panaretos2013cramer})  and/or judicious discretisation (pixelisation) of the spatial domain (the argument of each function). An additional side effect of this, in contrast to the multivariate case, is that one must pay particular attention that the simulation algorithms scale well as the discretisation resolution refines and the dimension parameter grows, and these need to be incorporated in the time complexity assessments.

Our spectral domain simulation method constitutes a general approach, able to simulate arbitrary functional time series that are specified in the frequency domain, with additional computational speed-ups that can be realised when assuming a special structure of the spectral density operators. In particular, simulation of the important \FARFIMA{} processes can be much faster in the spectral domain than in the time-domain, while the spectral domain simulation of \FARMA{} processes is competitive with time-domain methods.

The rest of the article is structured as follows:
Section~\ref{sec:framework} introduces the functional time series framework with special attention to their (doubly) spectral analysis and includes the aforementioned novel derivation of the spectral density operators of FARMA and FARFIMA processes as Theorem~\ref{thm:FARMA-part-ii} and \ref{thm:FARFIMA-part-ii} respectively. Section~\ref{sec6:simulation_in_spectral_domain} presents the high-level spectral domain simulation algorithm along with a discussion of its various implementation as subsections.
Section~\ref{sec6:examples} provides with concrete examples followed by a short benchmark simulation study. Section~\ref{sec:general_redommendations} concludes the article by summarising key features and qualities of the proposed simulation methods, along with some recommendations for practitioners.

The article is accompanied by an \texttt{R} package \texttt{specsimfts} (Section~\ref{sec:code_availability}) that implements all the proposed methods and includes several demo files that are easy to modify and can be easily made use of by practitioners.

\section{Functional Time Series Framework}
\label{sec:framework}

\subsection{Spectral Analysis of Functional Time Series}

We will throughout work in a real separable Hilbert space denoted as $\HR$ with inner product $\langle f,g \rangle,\,f,g\in\HR$ and induced norm $\|f\|,\,f\in\HR$. The complexification of $\HR$ is denoted as $\HC$ and we maintain the same notation for the inner product $\langle\cdot,\cdot\rangle$ and norm $\|\cdot\|$ on $\HC$. Though parts of the functional time series theory presented in this section are valid for any such $\HR$ and $\HC$, the simulation methods are tailored to the space of real square-integrable functions defined on $[0,1]$, denoted as $\LtwoR$. The inner product on $\LtwoR$, or its complexification $\LtwoC$, is defined as $\langle g_1,g_2 \rangle = \int_0^1 f(x)\overline{g(x)}\D x,\,f,g\in\HR$ (or $\in\HC$), and the norm $\|f\| = (\int_0^1 |f(x)|^2\D x)^{1/2},\,f\in\HR$  (or $\in\HC$).
The space of the bounded linear operators acting on $\HR$ and $\HC$ is denoted $\mathcal{L}(\HR)$ and $\mathcal{L}(\HC)$ respectively and the corresponding operator norm  as $\|\cdot\|_{\mathcal{L}(\HR)}$ and $\|\cdot\|_{\mathcal{L}(\HC)}$ respectively.

The classical approach in functional data analysis is to probabilistically model the functional data as random elements in the Hilbert space $\HR$. Considering $Z$ to be a random element in $\HR$ with a finite second moment $\E \|Z\|^2<\infty$, we define its \textit{mean function} as
$ \mu_Z = \E Z\in \HR $ and the \textit{covariance operator}
$$ \mathscr{R}^Z = \Ez{ (Z-\mu) \otimes (Z-\mu) } = \Ez{ \langle \cdot, Z-\mu\rangle (Z-\mu) },$$
where $x \otimes y$ denotes the tensor product of $x,y\in\HC$ defined as the operator $x \otimes y : \HC\to\HC,\, v \mapsto \langle v, y \rangle x$. The covariance operator $\mathscr{R}^Z$ is a self-adjoint positive-definite trace class operator.

A \textit{(real) functional time series} is conceptualized as a time ordered sequence of random elements in $\HR$ and is denoted as $X \equiv \{X_t\}_{t\in\mathbb{Z}}$.
Throughout this article we work with functional time series with finite second moments, i.e. $\E \|X_t\|^2 < \infty,\,t\in\mathbb{Z}$, and which are second-order stationary in the time variable $t$.
If we additionally assume the random curves perspective, i.e. assuming $\HR$ to be the function space $\LtwoR$, it is common 
 to assume that the individual sample paths (trajectories) of the random curves are continuous.
In this case, a functional time series can be interpreted pointwise as a sequence of random curves $X\equiv \{X_t(x):x\in[0,1]\}_{t\in\mathbb{Z}}$. The index variable $t$ is interpreted as a discrete time parameter, and argument variable $x$ can often be interpreted as a continuous spatial location in the domain $[0,1]$, and we choose to refer to $x$ as the spatial location for clarity.

Under the above stated assumptions we may define the first and second order characteristics of the functional time series $X\equiv \{X_t\}_{t\in\mathbb{Z}}$, namely the \textit{mean function}
$\mu_X = \E X_0$
and, for $h\in\mathbb{Z}$, the \textit{lag-$h$ autocovariance operator}
$$ \mathscr{R}^X_h = \Ez{ \left( X_{h} - \mu_X \right) \otimes \left( X_0 - \mu_X \right) } 
= \Ez{ \left\langle \cdot, X_0 - \mu_X \right\rangle \left( X_{h} - \mu_X \right) } . $$
To simplify the notation and the presentation we shall only consider the centred functional time-series, i.e. $\mu\equiv 0$, in order to focus on second order structure, which is the essential part for simulation purposes.


We now review key aspects of the analysis of functional time series in the spectral domain. First, we consider functional time series satisfying  \textit{weak dependence} conditions, manifested in one of the following norms:
\begin{align}
\label{eq6:weak_dependence_trace_norm}
\sum_{h\in\mathbb{Z}} \left\| \mathscr{R}^X_h \right\|_1 &< \infty, \\
\label{eq6:weak_dependence_HS_norm}
\sum_{h\in\mathbb{Z}} \left\| \mathscr{R}^X_h \right\|_2 &< \infty, \\
\label{eq6:weak_dependence_op_norm}
\sum_{h\in\mathbb{Z}} \left\| \mathscr{R}^X_h \right\|_{\mathcal{L}(\HR)} &< \infty
\end{align}
where $\|\cdot\|_1$, $\|\cdot\|_2$, $\|\cdot\|_{\mathcal{L}(\HR)}$ denote the trace-class norm, the Hilbert-Schmidt norm, and the operator norm respectively.
The \emph{spectral density operator} was first defined under \eqref{eq6:weak_dependence_trace_norm} by \citet{panaretos2013fourier}, under the slightly weaker assumption \eqref{eq6:weak_dependence_HS_norm} by \citet{hormann2015dynamic}, and finally under \eqref{eq6:weak_dependence_op_norm} by \citet{tavakoli2014}. Because \eqref{eq6:weak_dependence_op_norm} is the weakest condition of the three, we shall be working with this assumption, under which the {spectral density operator} is defined by the formula \citep{tavakoli2014}[Proposition 2.3.5]
\begin{equation}\label{eq6:definition_spectral_density_operator}
\mathscr{F}^X_\omega = \frac{1}{2\pi} \sum_{h\in\mathbb{Z}} \mathscr{R}^X_h e^{-\I h \omega}
\end{equation}
where the sum converges in $\|\cdot\|_{\mathcal{L}(\HC)}$ at each $\omega\in[0,2\pi]$.
The spectral density operator $\mathscr{F}^X_\omega$ is self-adjoint, non-negative definite and trace-class for each $\omega\in[0,2\pi]$
and the inversion formula holds in $\|\cdot\|_{\mathcal{L}(\mathcal{H})}$:
\begin{equation}\label{eq6:spectral_density_operator_inverse_formula}
\mathscr{R}^X_h = \int_0^{2\pi} \mathscr{F}^X_\omega e^{\I h\omega} \D \omega, \qquad h\in\mathbb{Z}.
\end{equation}

Furthermore, whenever
\begin{equation}\label{eq6:weak_dependence_traces}
\sum_{h\in\mathbb{Z}} \left|\tr(\mathscr{R}^X_h)\right| < \infty,
\end{equation}
the spectral density operator is uniformly bounded
$$ \sup_{\omega\in[0,2\pi]} \left\| \mathscr{F}^X_\omega \right\|_1 \leq \frac{1}{2\pi} \sum_{h\in\mathbb{Z}} \left|\tr(\mathscr{R}^X_h)\right| < \infty $$
and
$$ \sup_{h\in\mathbb{Z}} \left\| \mathscr{R}^X_h \right\|_1 \leq \sum_{h\in\mathbb{Z}} \left|\tr(\mathscr{R}^X_h)\right| < \infty .$$

Finally, the definition of spectral density operator can be relaxed into the notion of the \textit{weak} spectral density operator \citep{tavakoli2014}. Denote $\mathcal{L}_1(\HC)$ the space of trace-class operators on $\HC$. If there exists a function $\mathscr{F}^X : [0,2\pi] \to \mathcal{L}_1(\HC)$ defined almost everywhere on $[0,2\pi]$ such that $\int_0^{2\pi}\| \mathscr{F}^X_\omega \|_1 \D\omega < \infty$
and the inversion formula\eqref{eq6:spectral_density_operator_inverse_formula} holds,
then $\mathscr{F}^X$ is called the \textit{weak spectral density operator} of $X$.
If the weak spectral density operator exists it is defined uniquely only almost everyone on $[0,2\pi]$. This is a consequence of the fact that $\mathscr{F}^X$ is defined as an element of the Bochner space $L^1( [0,2\pi], \mathcal{L}_1(\HC) )$.
That being said, under the weak dependence \eqref{eq6:weak_dependence_op_norm}, the spectral density operator \eqref{eq6:definition_spectral_density_operator} is also the weak spectral density operator.

Though the definition of the weak spectral density operator appears rather abstract, it is in fact required for the spectral analysis of  long-range dependent FARFIMA processes (considered in Section~\ref{subsec:FARFIMA}) which do not satisfy the assumption \eqref{eq6:weak_dependence_op_norm} but will be shown to admit a weak spectral density operator.

%

Lastly we point out that we opt for presenting the spectral theory with the spectral domain $[0,2\pi]$, as opposed to $[-\pi,\pi]$ often adopted in literature \citep{panaretos2013cramer,tavakoli2014,hormann2015dynamic}, because its connections to the simulation methods based on discrete (fast) Fourier transform in Section~\ref{sec6:simulation_in_spectral_domain} are more transparent. These two perspectives are equivalent and can be easily interchanged by the $2\pi$-periodicity
$$
\mathscr{F}^X_{-\omega} = \mathscr{F}^X_{2\pi-\omega}, \qquad \omega\in[0,\pi].
$$

\subsection{The Cram\'{e}r-Karhunen-Lo\`{e}ve Representation}
\label{subsec:CKL}

The classical Karhunen-Lo\`{e}ve expansion decomposes i.i.d. functional data into uncorrelated components and achieves optimal dimensionality reduction at the same time. It has consequently been used as a main tool for simulating independent functional data. The situation for functional time series data becomes more involved due to the dependence between curves, and using a similar decomposition for the purpose of simulation will now require two steps.
Firstly, the Cram\'er representation (Proposition~\ref{prop:cramer} and \eqref{eq6:cramer}), which separates the functional time series into distinct uncorrelated frequencies. And, in addition to that, applying the ideas of the classical Karhunen-Lo\`eve expansion at each frequency to obtain the Cram\'{e}r-Karhunen-Lo\`eve representation (Proposition~\ref{prop:optimality_CKL} and \eqref{eq6:CKL_truncated}). We now review these two representations because they, together with their discretised approximations \eqref{eq6:cramer_approx} and \eqref{eq6:CKL_approx_truncated}, will provide the basis for our simulation method presented in Section~\ref{sec6:simulation_in_spectral_domain}.

Before venturing into the spectral domain, we recall the classical Karhunen-Lo\`{e}ve expansion \citep{karhunen1946spektraltheorie,loeve1946fonctions,ash2014topics,grenander1981abstract}. Let $\{X_t\}$ be i.i.d. zero-mean square-integrable random elements in $\HR$ and denote the eigendecomposition of the corresponding covariance operator as
$ \mathscr{R}^X_0 = \sum_{n=1}^\infty \lambda_n \varphi_n \otimes \varphi_n $
where $\{\lambda_n\}_{n=1}^\infty$ are the eigenvalues of $\mathscr{R}^X_0$ and $\{\varphi_n\}_{n=1}^\infty$ their associated eigenfunctions. Then, the classical Karhunen-Lo\`{e}ve expansion relies on truncating the sum
$$ X_t = \sum_{n=1}^\infty \sqrt{\lambda_n} \xi^{(t)}_n \varphi_n $$
where $\xi^{(t)}_n = \langle X_t, \varphi_n \rangle / \sqrt{\lambda_n}$.
The mode of convergence depends on the regularity of $\mathscr{R}^X_0$, but convergence in expected squared Hilbert norm is always valid when $\mathscr{R}^X_0$ is trace-class.

In order to take into account the temporal dependence one begins by decomposing the time series into distinct frequencies, a step made rigorous by means of the functional Cram\'{e}r representation, due to \citet{panaretos2013cramer}[Theorem 2.1] and  \citet{tavakoli2014}[Theorem 2.4.3]. We combine the two statements into a single statement, to be used for our purposes, below:

\begin{proposition}[Functional Cram\'{e}r representation]
\label{prop:cramer}
Let the functional time series $X\equiv\{X_t\}_{t\in\mathbb{Z}}$ admit the weak spectral density operator $\mathscr{F}^X \in L^p([0,2\pi], \mathcal{L}_1(\HC)$ for some $p\in(1,\infty]$. Then $X$ permits the functional Cram\'{e}r representation
\begin{equation}\label{eq6:cramer}
X_t = \int_0^{2\pi} e^{\I t\omega} \D Z_\omega, \qquad\text{almost surely}.
\end{equation}
where stochastic integral \eqref{eq6:cramer} can be understood in Riemann–Stieltjes limit sense
\begin{equation}\label{eq6:cramer_riemann_stieltjes}
\Ez{ \left\| X_t - \sum_{k=1}^K e^{\I t \omega_k} \left( Z_{\omega_{k+1}} - Z_{\omega_k} \right) \right\|^2 } \to \infty,
\qquad\text{as}\quad K\to\infty,
\end{equation}
where $0=\omega_1 < \dots < \omega_{k+1} = 2\pi$ and $\max |\omega_{k+1}-\omega_k| \to 0$ as $K\to\infty$.
For each $\omega\in[0,2\pi]$, $Z_\omega$ is a random element in $\HC$ defined by
\begin{equation}\label{eq6:cramer_definition_Z}
Z_\omega = \lim_{T\to\infty} \sum_{|t|<T} \left( 1 + \frac{|t|}{T} \right) g_\omega(t) X_{-t}
\end{equation}
where the limit holds with respect to $\E\|\cdot\|^2$ and
$$ g_\omega(t) = \frac{1}{2\pi} \int_{0}^{\omega} e^{-\I t\alpha} \D\alpha, \qquad \omega\in[0,2\pi]. $$
Moreover, the process $\{Z_\omega\}_{\omega\in[0,2\pi]}$ satisfies $\E[ \| Z_\omega\|^2_2 ] = \int_0^\omega \|\mathscr{F}^X_\alpha\|_1 \D\alpha$, $ \E[ Z_\omega \otimes Z_{\omega'} ] = \int_0^{\min(\omega,\omega')} \mathscr{F}^X_\alpha \D\alpha $ for $\omega,\omega'\in [0,2\pi]$ and has orthogonal increments
$$ \E\left\langle Z_{\omega_1}-Z_{\omega_2}, Z_{\omega_3} - Z_{\omega_4} \right\rangle = 0$$
with $\omega_1 > \omega_2 \geq \omega_3 > \omega_4.$
\end{proposition}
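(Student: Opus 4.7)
The plan is to follow the Hilbert-valued adaptation of the classical scalar Cram\'{e}r representation, combining and mildly streamlining the constructions in \citet{panaretos2013cramer} and \citet{tavakoli2014}. The proof proceeds in four stages: construction of $Z_\omega$ as an $L^2$-limit, computation of its covariance structure, verification of orthogonal increments, and finally the stochastic integral representation \eqref{eq6:cramer}.

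First, I would construct $Z_\omega$ as the mean-square limit of the partial Fej\'{e}r-type sums
$$ Z^{(T)}_\omega = \sum_{|t|<T} \bigl(1 + \tfrac{|t|}{T}\bigr) g_\omega(t) X_{-t}, $$
noting that $g_\omega(t)$ is (up to sign) the $t$-th Fourier coefficient of the indicator $\mathbf{1}_{[0,\omega]}$ on $[0,2\pi]$. Using stationarity and the identity $\E[X_{-t} \otimes X_{-s}] = \mathscr{R}^X_{s-t}$, the cross-covariance $\E[Z^{(T)}_\omega \otimes Z^{(T')}_{\omega'}]$ expands as a double sum of $g_\omega(t)\overline{g_{\omega'}(s)}\,\mathscr{R}^X_{s-t}$ weighted by the triangular factors. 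Substituting the inversion formula \eqref{eq6:spectral_density_operator_inverse_formula} and exchanging sum and integral (justified by the $L^p$ integrability of $\|\mathscr{F}^X_\cdot\|_1$ together with the standard $L^{p'}$ bound on the Fej\'{e}r kernel via H\"{o}lder's inequality) recasts the expression as a Fej\'{e}r-kernel acting on $\mathbf{1}_{[0,\omega]}\cdot\overline{\mathbf{1}_{[0,\omega']}}$. Standard approximate-identity arguments then yield the limit $\int_0^{\min(\omega,\omega')} \mathscr{F}^X_\alpha\,\D\alpha$, from which Cauchy-ness of $\{Z^{(T)}_\omega\}_T$ in $L^2(\Prob;\HC)$ follows and defines $Z_\omega$.

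Continuity of $L^2$-limits then transports the covariance identity
$$ \E[Z_\omega \otimes Z_{\omega'}] = \int_0^{\min(\omega,\omega')} \mathscr{F}^X_\alpha\,\D\alpha $$
to the limit; taking traces and using positivity of $\mathscr{F}^X_\alpha$ gives $\E\|Z_\omega\|^2 = \int_0^\omega \|\mathscr{F}^X_\alpha\|_1\,\D\alpha$. Orthogonal increments for $\omega_1>\omega_2\geq\omega_3>\omega_4$ fall out directly, since $\E\langle Z_{\omega_1}-Z_{\omega_2}, Z_{\omega_3}-Z_{\omega_4}\rangle$ decomposes into a linear combination of four trace integrals over intervals $[0,\min(\omega_i,\omega_j)]$ which cancel by the disjointness of $(\omega_2,\omega_1]$ and $(\omega_4,\omega_3]$. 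For the stochastic integral \eqref{eq6:cramer}, I would expand
$$ \E\|X_t - S_K\|^2 = \E\|X_t\|^2 - 2\,\Re\,\E\langle X_t, S_K\rangle + \E\|S_K\|^2, $$
where $S_K = \sum_{k=1}^K e^{\I t\omega_k}(Z_{\omega_{k+1}}-Z_{\omega_k})$. The joint cross-covariance $\E\langle X_t, Z_\omega\rangle$ is obtained by the same Fej\'{e}r-limit computation as in Stage~1, reducing the whole expression to $\int_0^{2\pi} |e^{\I t\omega} - \phi_K(\omega)|^2 \tr(\mathscr{F}^X_\omega)\,\D\omega$ with $\phi_K = \sum_{k} e^{\I t\omega_k}\mathbf{1}_{(\omega_k,\omega_{k+1}]}$, and dominated convergence in $L^1([0,2\pi],\tr(\mathscr{F}^X_\omega)\D\omega)$ finishes as the partition mesh tends to zero.

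The main obstacle is justifying the limit interchanges and the Fej\'{e}r-kernel convergence at the operator-valued level using only the Bochner $L^p$-integrability assumption on $\|\mathscr{F}^X_\cdot\|_1$ for some $p \in (1,\infty]$, rather than pointwise or summable-norm control. This forces a careful deployment of H\"{o}lder's inequality against uniform $L^{p'}$ bounds on truncated Fej\'{e}r kernels, and one must track throughout which topology (operator norm on $\mathcal{L}(\HC)$, trace norm, or Hilbert--Schmidt norm) is appropriate at each step, in particular distinguishing convergence of operator-valued second moments from convergence of the $\HC$-valued random elements $Z^{(T)}_\omega$ themselves.
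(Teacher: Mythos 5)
The paper does not prove this proposition itself---it is imported (modulo two typos) from \citet{panaretos2013cramer}[Theorem 2.1] and \citet{tavakoli2014}[Theorem 2.4.3]---and your outline is a faithful reconstruction of the argument given in those references: Fej\'{e}r/Ces\`{a}ro summation to construct $Z_\omega$ as an $L^2$-limit, the covariance computation via the inversion formula and approximate-identity convergence of the Fej\'{e}r kernel (which is exactly where the hypothesis $p>1$, hence $p'<\infty$, is needed), and the reduction of $\E\|X_t-S_K\|^2$ to $\int_0^{2\pi}|e^{\I t\omega}-\phi_K(\omega)|^2\tr(\mathscr{F}^X_\omega)\,\D\omega$ followed by dominated convergence. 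Note only that your argument implicitly corrects two typos in the statement as printed: the weights in \eqref{eq6:cramer_definition_Z} should be the Ces\`{a}ro weights $\left(1-\tfrac{|t|}{T}\right)$, and the limit in \eqref{eq6:cramer_riemann_stieltjes} should of course be $\to 0$, not $\to\infty$.
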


The Cram\'{e}r representation \eqref{eq6:cramer} provides a scheme for decomposing $X$ into distinct frequencies. For $0=\omega_1 < \dots < \omega_{k+1} = 2\pi$ we have an approximation by \eqref{eq6:cramer_riemann_stieltjes}
\begin{equation}\label{eq6:cramer_approx}
X_t \approx \sum_{k=1}^K e^{\I t\omega_k} \left( Z_{\omega_{k+1}} - Z_{\omega_k} \right).
\end{equation}
The approximation \eqref{eq6:cramer_approx} essentially decomposes the functional time series $\{X_t\}_{t\in\mathbb{Z}}$ into uncorrelated components $Z_{\omega_{k+1}} - Z_{\omega_k},\,k=1,\dots,K$.
Heuristically, the covariance operator of the increment $Z_{\omega_{k+1}} - Z_{\omega_k}$ is expected to be close to $\mathscr{F}^X_{\omega_k} (\omega_{k+1}-\omega_k)$.
By virtue of being a non-negative definite operator, the spectral density operator $\mathscr{F}^X_{\omega}$, admits a spectral decomposition of its own at each frequency $\omega$,
\begin{equation}\label{eq6:spectral_density_operator_harmonic_decomposition}
\mathscr{F}^X_\omega = \sum_{n=1}^\infty \lambda_n(\omega) \varphi_n(\omega)\otimes\varphi_n(\omega)
\end{equation}
where $ \{\lambda_n(\omega)\}_{n=1}^\infty $ are the eigenvalues of $\mathscr{F}^X_{\omega}$, called the \textit{harmonic eigenvalues}, and their associate eigenfunctions $\{ \varphi_n(\omega) \}_{n=1}^\infty$, called the \textit{harmonic eigenfunctions}.
 This suggests a second level of approximation, namely using the Karhunen-Lo\`eve expansion to write
$$
X_t \approx \sum_{k=1}^K e^{\I t\omega_k} \sum_{n=1}^\infty \xi_n^{(k)} \varphi_n(\omega_k)
$$
with $\xi_n^{(k)} = \langle  Z_{\omega_{k+1}} - Z_{\omega_k}, \varphi_n(\omega_k) \rangle / \sqrt{\lambda_n(\omega_k)}$ and then truncating at $N\in\mathbb{N}$
\begin{equation}\label{eq6:CKL_approx_truncated}
X_t \approx \sum_{k=1}^K e^{\I t\omega_k} \sum_{n=1}^N \xi_n^{(k)} \varphi_n(\omega_k) .
\end{equation}

The approximation \eqref{eq6:CKL_approx_truncated} consists of finite number of uncorrelated random variables $\xi_n^{(k)},\,k=1\dots,K,\,n=1,\dots,N$  and will serve as the basis for our simulation method described in Section \ref{subsec6:simulation_CKL}.
To rigorously define this approach, and show its optimality, we must consider the stochastic integral
\begin{equation}\label{eq6:stoch_integral}
\int_0^{2\pi} e^{\I t\omega} C(\omega)\D Z_\omega
\end{equation}
which can be defined by the means similar to the It\^{o} stochastic integral, rigorously proved in \citet{panaretos2013cramer} and \citet{tavakoli2014}.
If $\mathscr{F}^X \in L([0,2\pi], \mathcal{L}_1(\HC) ) $ for $p\in(1,\infty]$, then \eqref{eq6:stoch_integral} is well defined for $C\in\mathbb{M}$ where $\mathbb{M}$ is the completion of $L^{2q}( [0,2\pi], \mathcal{L}(\HC) )$ with respect to the norm $\|\cdot\|_\mathbb{M} = \sqrt{\langle \cdot,\cdot \rangle_{\mathbb{M}}}$ where
$$ \langle A,B \rangle_{\mathbb{M}} = \int_0^{2\pi} \tr\left( A(\omega) \mathscr{F}^X_\omega B(\omega)^*  \right)\D\omega,
\qquad A,B\in\mathbb{M}. $$
In this notation, one has  (\cite{panaretos2013cramer}[Theorem 3.7], \citet{tavakoli2014}[Theorem 2.8.2]):

\begin{proposition}[Optimality of Cram\'{e}r-Karhunen-Lo\`{e}ve representation]
\label{prop:optimality_CKL}
Let the functional time series $X\equiv\{X_t\}_{t\in\mathbb{Z}}$, satisfying the functional Cram\'{e}r representation \eqref{eq6:cramer},  admit the weak spectral density operator $\mathscr{F}^X \in L^1([0,2\pi], \mathcal{L}_1(\HC))$ such that the function $\omega\in[0,2\pi]\mapsto \mathscr{F}^X_\omega$ is continuous on $[0,2\pi]$ with respect to the operator norm $\|\cdot\|_{\mathcal{L}(\HC)}$ and all the non-zero harmonic eigenvalues of $\mathscr{F}^X_\omega$ are distinct, $\omega\in[0,2\pi]$. Let
$$ X_t^* = \int_0^{2\pi} e^{\I t\omega} C(\omega) \D Z_\omega $$
with $C \in \mathbb{M}$.
Let $N: [0,2\pi]\to\mathbb{N}$ be a c\`{a}dl\`{a}g function.
Then, the solution to
\begin{align*}
&\min \Ez{ \left\| X_t - X_t^* \right\|^2} \\
\text{subject to}\quad
& \rank( C(\omega) ) \leq N(\omega)
\end{align*}
is given by
$$ C(\omega) = \sum_{n=1}^{N(\omega)} \varphi_n(\omega) \otimes \varphi_n(\omega).$$
Moreover, the approximation error is given by
$$ \Ez{ \left\| X_t - X_t^* \right\|^2} = \int_0^{2\pi} \left\{ \sum_{n=N(\omega)+1}^\infty \lambda_n(\omega) \right\} \D \omega .$$
\end{proposition}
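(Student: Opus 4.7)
The plan is to use the isometry property of the functional stochastic integral against $Z$ to reduce the problem to a pointwise-in-$\omega$ low-rank approximation problem, solvable by the infinite-dimensional Ky Fan / Eckart--Young principle.

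\emph{Isometry reduction.} Since $X_t - X_t^* = \int_0^{2\pi} e^{\I t\omega}(\Id - C(\omega))\,\D Z_\omega$, the isometry for stochastic integrals against $Z$, established in \citet{panaretos2013cramer,tavakoli2014}, gives
\begin{equation*}
\Ez{\| X_t - X_t^* \|^2} = \int_0^{2\pi} \tr\bigl( (\Id - C(\omega))\, \mathscr{F}^X_\omega\, (\Id - C(\omega))^* \bigr) \D \omega,
\end{equation*}
because the unit-modulus factors $e^{\pm \I t \omega}$ cancel inside the trace. Since the integrand is pointwise nonnegative, it suffices to minimise it pointwise in $\omega$ subject to $\rank(C(\omega)) \leq N(\omega)$, provided the resulting pointwise minimiser defines an admissible integrand (an element of $\mathbb{M}$).

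\emph{Pointwise problem.} Fix $\omega$, let $V$ be the closure of the range of $C(\omega)$, so $\dim V \leq N(\omega)$, and let $P_V$ be the orthogonal projection onto $V$. Because $P_V - C(\omega)$ has range contained in $V$ while $\Id - P_V$ vanishes on $V$, the cross-trace terms in the expansion $\Id - C = (\Id - P_V) + (P_V - C)$ vanish, and one obtains the Pythagorean identity
\begin{equation*}
\tr((\Id - C)\mathscr{F}^X_\omega(\Id - C)^*) = \tr((\Id - P_V)\mathscr{F}^X_\omega(\Id - P_V)^*) + \tr((P_V - C)\mathscr{F}^X_\omega(P_V - C)^*).
\end{equation*}
The optimum over $C$ with range in $V$ is thus $C = P_V$, and the problem reduces to maximising $\tr(P_V \mathscr{F}^X_\omega P_V) = \sum_{j=1}^{\dim V} \langle \mathscr{F}^X_\omega e_j, e_j\rangle$ over subspaces $V$ with $\dim V \leq N(\omega)$. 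The infinite-dimensional Ky Fan / Courant--Fischer theorem gives this maximum as $\sum_{n=1}^{N(\omega)} \lambda_n(\omega)$, uniquely attained -- thanks to the distinctness of the nonzero harmonic eigenvalues -- by $V = \mathrm{span}\{\varphi_1(\omega), \dots, \varphi_{N(\omega)}(\omega)\}$. Hence the pointwise minimiser is $C(\omega) = \sum_{n=1}^{N(\omega)} \varphi_n(\omega)\otimes\varphi_n(\omega)$, with minimum value $\sum_{n>N(\omega)} \lambda_n(\omega)$, which integrates to the asserted error formula.

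\emph{Admissibility and main obstacle.} The step requiring most care is the verification that the pointwise candidate $C$ lies in $\mathbb{M}$, so that pointwise minimisation genuinely solves the original problem. The norm-continuity of $\omega \mapsto \mathscr{F}^X_\omega$ together with the distinctness of the nonzero harmonic eigenvalues implies, via standard perturbation theory for self-adjoint compact operators, that each spectral projection $\varphi_n(\omega) \otimes \varphi_n(\omega)$ depends continuously on $\omega$; combined with the c\`adl\`ag regularity of $N$, this makes $\omega \mapsto C(\omega)$ strongly measurable. Since $\|C(\omega)\|_{\mathcal{L}(\HC)} \leq 1$ pointwise, one has $\|C\|_\mathbb{M}^2 \leq \int_0^{2\pi} \tr(\mathscr{F}^X_\omega)\,\D\omega < \infty$, placing $C$ in $\mathbb{M}$ and completing the argument.
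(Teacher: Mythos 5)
The paper does not prove Proposition~\ref{prop:optimality_CKL}; it imports it verbatim from \citet{panaretos2013cramer} (Theorem 3.7) and \citet{tavakoli2014} (Theorem 2.8.2). Your argument --- reduce via the stochastic-integral isometry to $\|\Id - C\|_{\mathbb{M}}^2 = \int_0^{2\pi}\tr\bigl((\Id-C(\omega))\mathscr{F}^X_\omega(\Id-C(\omega))^*\bigr)\D\omega$, minimise pointwise by the Pythagorean splitting $\Id-C=(\Id-P_V)+(P_V-C)$ and the Ky Fan maximum principle, then check measurability/admissibility of the optimiser using the norm-continuity of $\mathscr{F}^X_\omega$ and the distinctness of the eigenvalues --- is correct and is essentially the proof given in those references, so there is nothing to flag beyond the minor point that the rank constraint only makes literal sense for genuine operator-valued representatives of elements of the completion $\mathbb{M}$, a technicality already inherent in the statement itself.
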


Proposition~\ref{prop:optimality_CKL} justifies that the process
\begin{equation}\label{eq6:CKL_truncated}
 X^*_t = \int_0^{2\pi}  \sum_{n=1}^N e^{\I t\omega} \left(\varphi_n(\omega) \otimes \varphi_n(\omega)\right) \D Z_\omega 
\end{equation}
yields optimal dimension reduction when we set the rank requirement $N(\omega)\equiv N\in\mathbb{N}$ uniformly across all frequencies. Although the definition of the finite dimensional reduction \eqref{eq6:CKL_truncated} appears quite abstract, it turns out that one can represent $X^*$ in one-to-one manner as an $N$-dimensional multivariate time series using a particular choice of the filter of the original time series $X$. Because our simulation method presented in Subsection~\ref{subsec6:simulation_CKL} is based directly on the approximations \eqref{eq6:CKL_approx_truncated} and \eqref{eq6:CKL_truncated}, we do not pursue the multivariate time series representation here and refer the reader to \citet{panaretos2013cramer,tavakoli2014,hormann2015dynamic}.


\subsection{Spectral Analysis of \FARMA{} Processes}
\label{subsec:FARMA}

Linear models for processes in function spaces have been extensively studied in the literature, and many classical time series models from the scalar or vector time series domain have been gradually generalised to infinite dimensions. Functional autoregressive processes have been treated in depth by \citet{bosq2012linear} and \citet{mas2007weak}, and functional moving average process by \citet{chen2016functional}. Their combination, the
functional autoregressive moving average (FARMA) mocel, has been presented by 
\citet{klepsch2017prediction}. In the following text we recall the time domain analysis of FARMA  processes and then develop our new results on the frequency domain analysis thereof.

The \FARMA{} process, $p,q\in\mathbb{N}_0$, is a sequence $X=\{ X_t \}_{t\in\mathbb{Z}}$ of random $\HR$-elements, satisfying the equation
\begin{equation}\label{eq6:FARMA_def}
X_t = \sum_{j=1}^p \mathcal{A}_j X_{t-j} + \epsilon_t + \sum_{j=1}^q \mathcal{B}_j \epsilon_{t-j}, \qquad t\in\mathbb{Z},
\end{equation}
where $\mathcal{A}_1,\dots,\mathcal{A}_p$ and $\mathcal{B}_1,\dots,\mathcal{B}_q$ are bounded linear operators and $\{\epsilon_t\}_{t\in\mathbb{Z}}$ is a sequence of zero-mean i.i.d. random elements in $\HR$ with the covariance operator $\mathcal{S}$.

The time-domain analysis of the \FARMA{} process was considered by \citet{klepsch2017prediction}, who in particular established:
\begin{theorem}[{\citet{klepsch2017prediction}}]
\label{thm:FARMA-part-i}
Assume that there exists $j_0\in\mathbb{N}$ such that the operator
$$
\tilde{\mathcal{A}} =
\begin{bmatrix}
\mathcal{A}_1 & \cdots & \mathcal{A}_{p-1} & \mathcal{A}_p \\
\Id &   &  & 0 \\
  & \ddots &  & \vdots \\
 &  & \Id & 0 \\
\end{bmatrix}
$$
satisfies
\begin{equation}\label{eq6:FARMA_condition}
\| \tilde{\mathcal{A}}^{j_0} \|_{\mathcal{L}(\HR^p)} < 1
\end{equation}
where $\Id$ is the identity operator on $\HR$ and $\|\cdot \|_{\mathcal{L}(\HR^p)}$ denotes the operator norm on $\mathcal{L}(\HR^p)$, the space of bounded linear operators acting on the product space $\HR^p=\HR\times\cdots\times\HR$.
Then the \FARMA{} process defined by \eqref{eq6:FARMA_def} is uniquely defined, stationary, and causal.
\end{theorem}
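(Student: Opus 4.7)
The plan is to reduce the \FARMA{} equation \eqref{eq6:FARMA_def} to a first order vector recursion on the product space $\HR^p$, and then solve it by inverting $\Id-\tilde{\mathcal{A}}$ via a Neumann-type series, in direct analogy with the finite-dimensional VAR(1) treatment. Concretely, I would stack
$$ Y_t = \begin{bmatrix} X_t \\ X_{t-1} \\ \vdots \\ X_{t-p+1} \end{bmatrix}, \qquad E_t = \begin{bmatrix} \epsilon_t + \sum_{j=1}^q \mathcal{B}_j \epsilon_{t-j} \\ 0 \\ \vdots \\ 0 \end{bmatrix}, $$
so that \eqref{eq6:FARMA_def} is equivalent to $Y_t = \tilde{\mathcal{A}} Y_{t-1} + E_t$. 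Since the $\epsilon_t$ are i.i.d.\ with $\E\|\epsilon_t\|^2<\infty$ and $q$ is finite, the innovation $\{E_t\}$ is a stationary, zero-mean, finitely correlated sequence in $\HR^p$ with $\E\|E_t\|_{\HR^p}^2<\infty$.

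The first technical step is to promote the single-power contraction \eqref{eq6:FARMA_condition} into geometric decay of all powers of $\tilde{\mathcal{A}}$. Writing $n = m j_0 + r$ with $0\le r <j_0$ and using submultiplicativity,
$$ \|\tilde{\mathcal{A}}^n\|_{\mathcal{L}(\HR^p)} \;\le\; \|\tilde{\mathcal{A}}^{j_0}\|_{\mathcal{L}(\HR^p)}^{m} \cdot \max_{0\le r<j_0}\|\tilde{\mathcal{A}}^r\|_{\mathcal{L}(\HR^p)}, $$
so that $\|\tilde{\mathcal{A}}^n\|_{\mathcal{L}(\HR^p)}\le C\rho^n$ for some $\rho<1$ and $C<\infty$. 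Consequently $\sum_{n\ge 0}\tilde{\mathcal{A}}^n$ is absolutely convergent in $\mathcal{L}(\HR^p)$ and equals $(\Id-\tilde{\mathcal{A}})^{-1}$.

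I would then define, for each $t\in\mathbb{Z}$, the candidate solution
$$ Y_t \;=\; \sum_{n=0}^{\infty} \tilde{\mathcal{A}}^n E_{t-n}, $$
and verify convergence in $L^2(\Omega,\HR^p)$: by stationarity of $\{E_t\}$ and the geometric bound, the tail sums satisfy
$$ \Ez{\Big\|\sum_{n=N}^{M}\tilde{\mathcal{A}}^n E_{t-n}\Big\|_{\HR^p}^2}^{1/2}\;\le\; \sum_{n=N}^M C\rho^n\, \Ez{\|E_0\|_{\HR^p}^2}^{1/2}\;\xrightarrow[N,M\to\infty]{}\;0. $$
Plugging the series into $Y_t - \tilde{\mathcal{A}} Y_{t-1}$ and shifting the summation index shows that the remainder is exactly $E_t$, so the first coordinate $X_t$ satisfies \eqref{eq6:FARMA_def}. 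Stationarity of $\{Y_t\}$, hence of $\{X_t\}$, follows from the shift invariance of the representation and the stationarity of $\{E_t\}$; causality is immediate because $Y_t$ (and thus $X_t$) is measurable with respect to $\sigma(\epsilon_s : s\le t)$.

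For uniqueness, given any other stationary second-order solution $Y'_t$, iterating the recursion $Y'_t-Y_t = \tilde{\mathcal{A}}(Y'_{t-1}-Y_{t-1})$ gives $Y'_t-Y_t = \tilde{\mathcal{A}}^n(Y'_{t-n}-Y_{t-n})$ for every $n\ge 1$; taking $L^2$-norms and using $\|\tilde{\mathcal{A}}^n\|_{\mathcal{L}(\HR^p)}\to 0$ together with the uniform second moment bound from stationarity yields $Y'_t=Y_t$ almost surely. The main obstacle, as I see it, is the first step: turning the assumption \eqref{eq6:FARMA_condition} on a single power into a genuine geometric bound, because $\tilde{\mathcal{A}}$ itself need not be a contraction, but the Euclidean-division argument above handles it cleanly. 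Everything else is a standard Neumann-series manipulation lifted from $\mathbb{R}^p$ to $\HR^p$.
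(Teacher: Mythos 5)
Your argument is correct and is essentially the same route the paper relies on: Theorem~\ref{thm:FARMA-part-i} is quoted from \citet{klepsch2017prediction}, and the state-space stacking $Y_t=\tilde{\mathcal{A}}Y_{t-1}+E_t$ together with the Neumann-type solution $\sum_{n\ge 0}\tilde{\mathcal{A}}^nE_{t-n}$ and the geometric bound $\|\tilde{\mathcal{A}}^n\|_{\mathcal{L}(\HR^p)}\le C\rho^n$ deduced from \eqref{eq6:FARMA_condition} is precisely the representation the paper reuses in its proof of Theorem~\ref{thm:FARMA-part-ii}. Your Euclidean-division step, the $L^2$ convergence check, and the uniqueness argument via $Y'_t-Y_t=\tilde{\mathcal{A}}^n(Y'_{t-n}-Y_{t-n})$ are all sound.
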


We now show that, under the same assumptions as those by \citet{klepsch2017prediction}, we may analyse characterise the \FARMA{} process in the spectral domain:
\begin{theorem}\label{thm:FARMA-part-ii}
Under the assumptions of Theorem~\ref{thm:FARMA-part-i}, the process satisfies the weak dependence condition \eqref{eq6:weak_dependence_op_norm} with $\mathscr{R}^X_h$, and its spectral density operator at frequency $\omega\in[0,2\pi]$ is given by
\begin{equation}\label{eq6:FARMA_spectral_density_operator}
\mathscr{F}^X_\omega = \frac{1}{2\pi}
\mathscr{A}( e^{-\I\omega} )^{-1}
\mathscr{B}( e^{-\I\omega} )
\mathcal{S}
\mathscr{B}( e^{-\I\omega} )^*
\left[\mathscr{A}( e^{-\I\omega} )^* \right]^{-1}
\end{equation}
where
\begin{align}
\label{eq6:FARMA_spectral_density_operator_def_A}
\mathscr{A}(z) &= \Id - \mathcal{A}_1 z - \dots - \mathcal{A}_p z^p, \\
\label{eq6:FARMA_spectral_density_operator_def_B}
\mathscr{B}(z) &= \Id + \mathcal{B}_1 z + \dots + \mathcal{B}_p z^q.
\end{align}
are $\mathcal{H}$-valued polynomials in the variable $z\in\mathbb{C}$.
\end{theorem}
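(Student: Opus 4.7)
The plan is to first pass from the recursion \eqref{eq6:FARMA_def} to an MA$(\infty)$ representation $X_t = \sum_{j\ge 0} \Psi_j \epsilon_{t-j}$ with operators $\Psi_j$ decaying geometrically in $\|\cdot\|_{\mathcal{L}(\HR)}$, then compute the autocovariances and invoke the definition \eqref{eq6:definition_spectral_density_operator} of $\mathscr{F}^X_\omega$ directly.

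Concretely, the first step is to use the companion-operator formulation of Theorem~\ref{thm:FARMA-part-i}. The condition $\|\tilde{\mathcal{A}}^{j_0}\|_{\mathcal{L}(\HR^p)} < 1$ forces the spectral radius of $\tilde{\mathcal{A}}$ to be strictly less than $1$, so by Gelfand's formula there exist $C>0$ and $\rho\in(0,1)$ with $\|\tilde{\mathcal{A}}^{j}\|_{\mathcal{L}(\HR^p)} \le C\rho^j$. Writing the recursion in companion form and iterating (as in the proof of Theorem~\ref{thm:FARMA-part-i}), and then collecting the MA coefficients, one obtains a unique causal solution $X_t = \sum_{j\ge 0}\Psi_j\epsilon_{t-j}$ with $\|\Psi_j\|_{\mathcal{L}(\HR)} \le C'\rho^j$. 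Because the operator-valued power series $\mathscr{A}(z)$ and $\mathscr{B}(z)$ are entire and the geometric decay of $\Psi_j$ implies that $\Psi(z) := \sum_{j\ge 0}\Psi_j z^j$ converges in $\|\cdot\|_{\mathcal{L}(\HR^{\mathbb{C}})}$ on a disk strictly larger than the closed unit disk, one can multiply the recursion by $z^t$, sum formally, and verify the algebraic identity $\mathscr{A}(z)\Psi(z) = \mathscr{B}(z)$ coefficient by coefficient; in particular $\mathscr{A}(z)$ is invertible on this disk and $\Psi(z) = \mathscr{A}(z)^{-1}\mathscr{B}(z)$.

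Next I would compute the autocovariance operators from the MA representation. Using that $\epsilon_t$ is i.i.d.\ with covariance $\mathcal{S}$, and the identity $(\mathcal{A}x)\otimes(\mathcal{B}y) = \mathcal{A}(x\otimes y)\mathcal{B}^*$, one obtains
\begin{equation*}
\mathscr{R}^X_h \;=\; \sum_{\substack{n,m\ge 0\\ n-m=h}} \Psi_n\,\mathcal{S}\,\Psi_m^{*}, \qquad h\in\mathbb{Z},
\end{equation*}
and the geometric bound $\|\Psi_j\|\le C'\rho^j$ together with the triangle and submultiplicativity inequalities yields $\|\mathscr{R}^X_h\|_{\mathcal{L}(\HR)} \le C''\rho^{|h|}$, which summed over $h$ gives the weak dependence condition \eqref{eq6:weak_dependence_op_norm}. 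Substituting this expression into the definition \eqref{eq6:definition_spectral_density_operator} and interchanging the two absolutely convergent sums (justified by the geometric bound and Fubini) gives
\begin{equation*}
\mathscr{F}^X_\omega = \frac{1}{2\pi}\sum_{n,m\ge 0} \Psi_n\,\mathcal{S}\,\Psi_m^{*}\, e^{-\I(n-m)\omega}
= \frac{1}{2\pi}\,\Psi(e^{-\I\omega})\,\mathcal{S}\,\Psi(e^{-\I\omega})^{*}.
\end{equation*}
Inserting $\Psi(e^{-\I\omega}) = \mathscr{A}(e^{-\I\omega})^{-1}\mathscr{B}(e^{-\I\omega})$ from the first step, together with $(\mathscr{A}(e^{-\I\omega})^{-1})^{*} = [\mathscr{A}(e^{-\I\omega})^{*}]^{-1}$, yields \eqref{eq6:FARMA_spectral_density_operator}.

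I expect the main obstacle to lie in the first step: converting the norm bound on the companion operator into an actual $\|\cdot\|_{\mathcal{L}(\HR)}$-summable (in fact geometric) bound on the MA coefficients $\Psi_j$, and then rigorously establishing the factorisation $\Psi(z)=\mathscr{A}(z)^{-1}\mathscr{B}(z)$ in the operator sense on a neighbourhood of the closed unit disk so that the spectral density formula is well defined for every $\omega\in[0,2\pi]$. Once these operator-analytic ingredients are in place, the computation of $\mathscr{R}^X_h$ and the subsequent rearrangement into the spectral factorisation are essentially formal and require only dominated convergence.
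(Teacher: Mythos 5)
Your proposal is correct and rests on the same central idea as the paper's own proof --- inverting the autoregressive part through the companion (state-space) operator $\tilde{\mathcal{A}}$, whose powers are summable in operator norm under \eqref{eq6:FARMA_condition} --- but the execution differs in a way worth noting. The paper never forms a single MA$(\infty)$ expansion in the innovations: it treats $X$ as the composition of two filters, first computing the spectral density of the finite moving average $\eta_t=\mathscr{B}(\Delta)\epsilon_t$, and then applying a general filtering lemma (Proposition~\ref{prop:filtration_sum_l^1}) to the $\ell^1$-summable filter with coefficients $P_1\tilde{\mathcal{A}}^sP_1^*$; both the formula \eqref{eq6:FARMA_spectral_density_operator} and the weak dependence condition \eqref{eq6:weak_dependence_op_norm} then drop out of that lemma. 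You instead collapse everything into one causal expansion $X_t=\sum_{j\ge0}\Psi_j\epsilon_{t-j}$, upgrade the summability of $\|\tilde{\mathcal{A}}^j\|_{\mathcal{L}(\HR^p)}$ to a geometric bound via Gelfand's formula, and compute $\mathscr{R}^X_h$ and the Fourier series \eqref{eq6:definition_spectral_density_operator} by hand. Your route is more self-contained and yields a sharper conclusion (geometric decay of $\|\mathscr{R}^X_h\|_{\mathcal{L}(\HR)}$, hence \eqref{eq6:weak_dependence_op_norm} immediately), at the cost of doing the convergence bookkeeping yourself; the paper's route outsources that bookkeeping to the filtering proposition. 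The one step you should make explicit is the identification $\sum_{j\ge0}P_1\tilde{\mathcal{A}}^jP_1^*z^j=\mathscr{A}(z)^{-1}$, i.e.\ that $\mathscr{A}(z)$ admits a two-sided inverse on a neighbourhood of the closed unit disk: the one-sided relation $\mathscr{A}(z)\Psi(z)=\mathscr{B}(z)$ that you obtain by matching coefficients does not by itself give invertibility of $\mathscr{A}(z)$, which the statement of the theorem presupposes. This is a standard companion-operator computation (and the paper is equally terse about it), so it is a gap in exposition rather than in substance.
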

Theorem~\ref{thm:FARMA-part-ii} is proved in Appendix~\ref{subsec:proof_of_thm:FARMA}.

\subsection{Spectral Analysis of \FARFIMA{} Process}
\label{subsec:FARFIMA}

Long range dependence (a.k.a. long memory) is a well known phenomenon in time series analysis, consisting in a time series exhibiting slow decay of its temporal dependence \citep{hurst1951long,mandelbrot1968fractional,beran1994statistics,palma2007long}. The need to model and analyse such series has led to the definition of autoregressive fractionally integrated moving average (ARFIMA) processes
\citep{granger1980introduction,hosking1981fractional}. Such long-range dependencies have also been detected functional time series, for example in series of daily volatility \citep{casas2008econometric}, and inspired the theoretical framework of long-range dependent functional time series model \citep{li2019long} and associated estimation methods \citep{shang2020comparison}.

\citet{li2019long} defined the functional ARFIMA process (FARFIMA) which and we recall its definition, before deriving its spectral analysis that will allow an efficient simulation of its realisations in Section~\ref{sec6:simulation_in_spectral_domain}.

The \FARFIMA{} model with $p,q\in\mathbb{N}_0$ and $d\in(-1/2,1/2)$ models a sequence $\tilde{X}=\{ \tilde{X}_t \}_{t\in\mathbb{Z}}$ of random $\HR$-elements via the equation
\begin{equation}\label{eq6:FARIMA_def}
(\Id-\Delta)^d \tilde{X}_t = X_t
\end{equation}
where $\Delta$ is the backshift operator and $X=\{ X_t \}_{t\in\mathbb{Z}}$ is the \FARMA{} process defined via equation \eqref{eq6:FARMA_def}.
When $d=0$, the \FARFIMA{} reduces to the \FARMA{} model.

\citet{li2019long} established the existence and uniqueness results of the \FARFIMA{} process and its time-domain properties:
\begin{theorem}[{\citet{li2019long}}]
\label{thm:FARFIMA-part-i}
The \FARFIMA{} process $\tilde{X} = \{\tilde{X}_t\}_{t\in\mathbb{Z}}$ with $p,q\in\mathbb{N}_0$ and $d\in(-1/2,1/2)$ defined by the equation \eqref{eq6:FARIMA_def} exists and constitutes a uniquely defined stationary causal functional time series
provided the autoregressive part satisfies the condition \eqref{eq6:FARMA_condition}.
Furthermore, if $d\in(0,1/2)$ the \FARFIMA{} process exhibits the long-memory dependence.
\end{theorem}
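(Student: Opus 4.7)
The plan is to construct $\tilde{X}$ directly by inverting the fractional difference operator via its binomial expansion,
\begin{equation*}
(\Id - \Delta)^{-d} = \sum_{j=0}^\infty \psi_j(d)\, \Delta^j, \qquad \psi_j(d) = \frac{\Gamma(j+d)}{\Gamma(d)\,\Gamma(j+1)},
\end{equation*}
and setting $\tilde{X}_t := \sum_{j=0}^\infty \psi_j(d)\, X_{t-j}$, where $X$ is the \FARMA{} process of Theorem~\ref{thm:FARMA-part-i}. Stirling's formula gives $\psi_j(d) \sim j^{d-1}/\Gamma(d)$, so the scalar coefficients are square-summable for $d \in (-1/2, 1/2)$ but fail to be absolutely summable when $d \in (0, 1/2)$. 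This immediately warns that the usual naive MA($\infty$) convergence argument will not suffice and motivates routing the proof through the spectral domain.

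To justify $L^2$--convergence of the series despite the lack of absolute summability, I would invoke the frequency-domain Parseval identity. Since Theorem~\ref{thm:FARMA-part-ii} yields a bounded, continuous trace-class spectral density $\mathscr{F}^X_\omega$, the inversion formula \eqref{eq6:spectral_density_operator_inverse_formula} gives
\begin{equation*}
\E \Bigl\| \sum_{j=M}^N \psi_j(d)\, X_{t-j} \Bigr\|^2 = \int_0^{2\pi} \Bigl| \sum_{j=M}^N \psi_j(d)\, e^{\I j \omega} \Bigr|^2 \tr \mathscr{F}^X_\omega \, \D\omega.
\end{equation*}
The trigonometric polynomial in the integrand is a tail of the Fourier series of $(1 - e^{\I\omega})^{-d}$, which lies in $L^2([0,2\pi])$ precisely when $d < 1/2$ (its modulus behaves like $|\omega|^{-d}$ near the origin). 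Hence the tails vanish in $L^2([0,2\pi])$, and since $\tr \mathscr{F}^X_\omega$ is bounded by the continuity guaranteed by the explicit formula \eqref{eq6:FARMA_spectral_density_operator}, the Cauchy criterion delivers $\tilde{X}_t \in L^2(\HR)$. Stationarity is inherited from $X$ via time-invariant convolution, causality is automatic since only $X_{t-j}$ with $j \geq 0$ appear, and uniqueness follows because $(\Id - \Delta)^{-d}$ is a left inverse of $(\Id - \Delta)^d$ on the stationary $L^2$--closure generated by $\{X_s\}$, pinning down $\tilde{X}$ as constructed.

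For the long-memory claim when $d \in (0, 1/2)$, the same filtering identity propagates to the spectral densities and yields $\mathscr{F}^{\tilde X}_\omega = |1 - e^{\I\omega}|^{-2d}\, \mathscr{F}^X_\omega$, which carries an integrable singularity of order $\omega^{-2d}$ at $\omega = 0$. A scalar Abelian--Tauberian argument applied to $\omega \mapsto \tr \mathscr{F}^{\tilde X}_\omega$ then delivers the power-law asymptotics $\tr \mathscr{R}^{\tilde X}_h \asymp c\, h^{2d-1}$ as $h \to \infty$, so $\sum_h |\tr \mathscr{R}^{\tilde X}_h| = \infty$ and the summability condition \eqref{eq6:weak_dependence_traces} fails, which is precisely the long-memory conclusion. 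The main obstacle throughout is that the usual Hilbert-space construction of MA($\infty$) processes presumes absolute summability of the coefficients, which breaks for $d \in (0, 1/2)$; bypassing operator-norm bounds in favour of trace-class spectral integration is what rescues the argument, and one must similarly justify the operator-valued factorization $\mathscr{F}^{\tilde X}_\omega = |1-e^{\I\omega}|^{-2d}\mathscr{F}^X_\omega$ rigorously in trace norm in a neighbourhood of $\omega = 0$.
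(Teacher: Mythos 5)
Your argument is essentially sound, but note that the paper does not prove this statement itself: Theorem~\ref{thm:FARFIMA-part-i} is quoted from \citet{li2019long}, and the closest internal analogue is the existence/stationarity step inside the proof of Theorem~\ref{thm:FARFIMA-part-ii} (Appendix~\ref{subsec:proof_of_thm:FARFIMA}). The comparison is therefore with that argument. The paper composes the scalar fractional filter $\mathcal{C}_k = c_k\Id$ with the operator filter $\mathscr{A}(\Delta)^{-1}$, checks via Young's convolution inequality that the composite filter is square-summable in operator norm, and applies it to the $q$-correlated moving-average process $\eta$, so that Proposition~\ref{prop:filtration_sum_l^2} (which requires an $m$-correlated input) delivers existence and stationarity. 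You instead apply the scalar filter $\{\psi_j(d)\}$ directly to the full \FARMA{} process $X$, which is not $m$-correlated, so that proposition is not available to you; you compensate with the Parseval identity $\E\|\sum_{j=M}^N\psi_j X_{t-j}\|^2=\int_0^{2\pi}|\sum_{j=M}^N\psi_j e^{-\I j\omega}|^2\,\tr\mathscr{F}^X_\omega\,\D\omega$ together with the uniform bound $\sup_\omega\|\mathscr{F}^X_\omega\|_1<\infty$, which indeed follows from \eqref{eq6:FARMA_condition} via the uniform boundedness of $\mathscr{A}(e^{-\I\omega})^{-1}$. Both routes are valid; yours is more self-contained and avoids the filter-convolution bookkeeping, while the paper's factorisation through the $q$-correlated $\eta$ lets it lean on an off-the-shelf proposition and is the form it needs anyway to read off the spectral density operator.

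Two points deserve tightening. The uniqueness claim is only gestured at: you should argue that any stationary difference $D_t$ of two solutions satisfies $(\Id-\Delta)^d D_t=0$, whose frequency response $|1-e^{\I\omega}|^{2d}$ vanishes only at $\omega=0$, forcing the spectral measure of $D$ to be degenerate and hence $D_t=0$. And the Tauberian step $\tr\mathscr{R}^{\tilde X}_h\asymp c\,h^{2d-1}$ requires $c>0$, i.e.\ $\tr\bigl(\mathscr{A}(1)^{-1}\mathscr{B}(1)\mathcal{S}\mathscr{B}(1)^*[\mathscr{A}(1)^*]^{-1}\bigr)>0$, which fails only in the degenerate case $\mathscr{B}(1)\mathcal{S}^{1/2}=0$; this implicit non-degeneracy assumption should be stated before concluding that \eqref{eq6:weak_dependence_traces} is violated.
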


Under the same assumptions as \citet{li2019long} we now determine the analytical expression of the spectral density operators of the \FARFIMA{} process:

\begin{theorem}\label{thm:FARFIMA-part-ii}
Under the assumptions of Theorem~\ref{thm:FARFIMA-part-i}, the \FARFIMA{} process admits the weak spectral density $\mathscr{F}^{\tilde{X}} \in L^1( [0,2\pi], \mathcal{L}_1( \HC ) )$ satisfying
\begin{equation}\label{eq6:FARIMA_spectral_density_operator}
\mathscr{F}^{\tilde{X}}_\omega =
\frac{1}{2\pi}
\left[ 2\sin\left(\frac{\omega}{2}\right) \right]^{-2d}
\mathscr{A}( e^{-\I\omega} )^{-1}
\mathscr{B}( e^{-\I\omega} )
\mathcal{S}
\mathscr{B}( e^{-\I\omega} )^*
\left[\mathscr{A}( e^{-\I\omega} )^* \right]^{-1},\quad\omega\in(0,2\pi),
\end{equation}
where $\mathscr{A}$ and $\mathscr{B}$ are given at
\eqref{eq6:FARMA_spectral_density_operator_def_A} and \eqref{eq6:FARMA_spectral_density_operator_def_B}.
The lag-$h$ autocovariance operators of $\tilde{X}$ satisfy
$$ \mathscr{R}_h^{\tilde{X}} = \int_0^{2\pi} \mathscr{F}_\omega^{\tilde{X}} e^{\I h\omega} \D\omega,\qquad h\in\mathbb{Z}. $$
\end{theorem}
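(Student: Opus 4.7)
The plan is to identify $\tilde X$ as a fractional linear filter of the FARMA process $X$ and to transfer the spectral identity from Theorem~\ref{thm:FARMA-part-ii} through the filter. In the spectral domain this amounts to multiplying $\mathscr{F}^X_\omega$ by the squared modulus of the transfer function $\psi(e^{-\I\omega})$, where $\psi(z)=(1-z)^{-d}=\sum_{k=0}^\infty \psi_k(d)z^k$ with $\psi_k(d)=\Gamma(k+d)/[\Gamma(d)\Gamma(k+1)]$. Since $|1-e^{-\I\omega}|^2=4\sin^2(\omega/2)$, the modulus squared equals $[2\sin(\omega/2)]^{-2d}$ on $(0,2\pi)$, yielding the formula \eqref{eq6:FARIMA_spectral_density_operator}.

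First I would establish the filter representation $\tilde X_t=\sum_{k=0}^\infty \psi_k(d)X_{t-k}$, converging in $L^2(\Omega,\HR)$, which already follows from Theorem~\ref{thm:FARFIMA-part-i} and the geometric decay of $\mathscr{R}^X_h$ inherited from the FARMA structure; the relevant asymptotics $\psi_k(d)=O(k^{d-1})$ place $\{\psi_k(d)\}$ in $\ell^2(\mathbb{N}_0)$ for $d\in(-1/2,1/2)$. Next I verify the integrability required for the weak spectral density: by Theorem~\ref{thm:FARMA-part-ii} together with \eqref{eq6:weak_dependence_traces}, the map $\omega\mapsto\mathscr{F}^X_\omega$ is uniformly bounded in trace norm on the compact interval $[0,2\pi]$, and $\int_0^{2\pi}[2\sin(\omega/2)]^{-2d}\,\D\omega<\infty$ for $d<1/2$. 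Hence the candidate $\mathscr{F}^{\tilde X}_\omega$ belongs to $L^1([0,2\pi],\mathcal{L}_1(\HC))$.

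The crux is then proving the inversion formula $\mathscr{R}^{\tilde X}_h=\int_0^{2\pi}\mathscr{F}^{\tilde X}_\omega e^{\I h\omega}\,\D\omega$, which by the definition of weak spectral density is the only remaining requirement. Using the $L^2$-continuity of the tensor product, I would write
\begin{equation*}
\mathscr{R}^{\tilde X}_h = \lim_{N,M\to\infty}\sum_{j=0}^N\sum_{k=0}^M \psi_j(d)\psi_k(d)\,\mathscr{R}^X_{h-j+k},
\end{equation*}
insert the FARMA inversion formula \eqref{eq6:spectral_density_operator_inverse_formula} for each $\mathscr{R}^X_{h-j+k}$, and interchange the (finite) sums with the integral to obtain
\begin{equation*}
\int_0^{2\pi}\Bigl(\sum_{j=0}^N\psi_j(d)e^{-\I j\omega}\Bigr)\overline{\Bigl(\sum_{k=0}^M\psi_k(d)e^{-\I k\omega}\Bigr)}\mathscr{F}^X_\omega e^{\I h\omega}\,\D\omega.
\end{equation*}
Passing to the limit in $N,M$ then identifies the two partial Fourier sums with $\psi(e^{-\I\omega})=(1-e^{-\I\omega})^{-d}$ and its conjugate, so that the integrand converges pointwise a.e.\ to $|1-e^{-\I\omega}|^{-2d}\mathscr{F}^X_\omega e^{\I h\omega}$.

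The principal obstacle is justifying this last passage to the limit: when $d>0$ the coefficients $\psi_k(d)$ are not absolutely summable, so a naive Fubini or dominated convergence argument on the coefficients fails. The available tool is the $L^2([0,2\pi])$-convergence of the partial Fourier sums to $(1-e^{-\I\omega})^{-d}$, which holds because this function is square-integrable for $d<1/2$; combined with the uniform trace-norm bound on $\mathscr{F}^X_\omega$ (which serves as a bounded multiplier), a Cauchy--Schwarz estimate against $\|\mathscr{F}^X_\omega\|_{\mathcal{L}(\HC)}$ provides the required dominating control and delivers convergence of the integrals. Once this is established, the inversion formula is verified and $\mathscr{F}^{\tilde X}$ is precisely the claimed weak spectral density.
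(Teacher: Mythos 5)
Your proposal is correct in substance, but it follows a genuinely different route from the paper. The paper never applies the fractional filter to the full FARMA process $X$: instead it writes $\tilde X_t = \mathcal{C}\bigl[\mathscr{A}(\Delta)^{-1}\eta_t\bigr]$ with $\eta_t=\mathscr{B}(\Delta)\epsilon_t$, forms the convolution $\mathcal{D}=\mathcal{C}\ast\mathscr{A}(\Delta)^{-1}$ of the square-summable fractional filter with the absolutely summable AR-inverse filter (Young's convolution inequality gives $\sum_k\|\mathcal{D}_k\|^2_{\mathcal{L}(\HC)}<\infty$), and then invokes Proposition~\ref{prop:filtration_sum_l^2}, whose hypothesis is precisely an $\ell^2$ filter acting on an \emph{$m$-correlated} input --- here the $q$-correlated moving-average process $\eta$. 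The entire analytic burden (existence, stationarity, the spectral identity, and the inversion formula) is thus delegated to that proposition. You instead apply the $\ell^2$ filter $\{\psi_k(d)\}$ directly to $X$, which is not finitely correlated, so Proposition~\ref{prop:filtration_sum_l^2} is not available to you; you compensate by proving the inversion formula by hand, via $L^2([0,2\pi])$-convergence of the partial Fourier sums of $(1-e^{-\I\omega})^{-d}$ together with the uniform trace-norm bound $\sup_\omega\|\mathscr{F}^X_\omega\|_1<\infty$ and a Cauchy--Schwarz estimate. That substitute is legitimate: the geometric decay of $\|\mathscr{R}^X_h\|_1$ coming from condition \eqref{eq6:FARMA_condition} does give both the $L^2$-convergence of the filter sum and the uniform trace-norm bound you need, and your identification of the non-summability of $\psi_k(d)$ for $d>0$ as the crux (ruling out a naive Fubini argument) is exactly right. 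Two points deserve tightening: the final estimate should be carried out in the trace norm $\|\cdot\|_1$ rather than ``against $\|\mathscr{F}^X_\omega\|_{\mathcal{L}(\HC)}$'', since membership in $L^1([0,2\pi],\mathcal{L}_1(\HC))$ and convergence of the operator-valued integrals are what is at stake; and the claim that the filter representation of $\tilde X$ ``already follows from Theorem~\ref{thm:FARFIMA-part-i}'' should be made explicit, since that theorem asserts existence but the identification of its solution with your $\mathrm{FMA}(\infty)$ series is what licenses the tensor-product limit computation. What your approach buys is independence from the $m$-correlation hypothesis of the filtering proposition; what the paper's approach buys is a much shorter proof that reuses the general filter machinery and avoids any direct manipulation of partial Fourier sums.
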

Theorem~\ref{thm:FARFIMA-part-ii} is proved in Appendix~\ref{subsec:proof_of_thm:FARFIMA}.

Note that for $d>0$, the term $[ 2\sin(\omega/2) ]^{-2d}$ in formula \eqref{eq6:FARIMA_spectral_density_operator} is unbounded in the neighbourhood of $0$ (and $2\pi$ due to the symmetry). The spectral density being unbounded in the neighbourhood of zero is quintessential also for the univariate ARFIMA processes \citep{hosking1981fractional}.

\section{Simulation of Functional Time Series with Given Spectrum}
\label{sec6:simulation_in_spectral_domain}



In this subsection we will present a functional time series simulation method in the spectral domain. We focus our presentation on functional time series with values in $\LtwoR$ whose trajectories are continuous and whose spectral density operators are integral operators with continuous kernels, but note that our discussion equally applies to other function spaces constituting separable Hilbert spaces.

The objective of the simulation is to generate a Gaussian sample $X_1,\dots,X_T$ for some $T\in\mathbb{N}$ given the spectral density operator $\{ \mathscr{F}_\omega^X \}_{\omega\in[0,2\pi]}$. Without loss of generality, we assume that $T$ is even and we furthermore define the canonical frequencies $\omega_k = (2\pi k)/T,\,k=1,\dots,T$.


At a high level, our spectral domain simulation methods mimics the discrete approximation of the Cram\'er representation \eqref{eq6:cramer_approx}, which boils down to performing the following two steps.

\begin{enumerate}
\item
Generate an ensemble of independent complex mean-zero Gaussian random elements $Z_k',\,k=1,\dots,T/2,T$ such that
\begin{equation}\label{eq6:simulate_Z_covariance_operator}
\Ez{ Z_k' \otimes Z_k' } = \mathscr{F}^X_{\omega_k},\qquad k=1,\dots,T/2,T,
\end{equation}
and, for $k=1,\dots,T/2-1$, generate independent copies $Z_k''$ thereof. 
Define
\begin{equation}\label{eq6:simulation_Z_definition}
Z_k =
\begin{cases}
	\sqrt{2} Z_k'  & k=T/2,T, \\
	Z_k' + \I Z_k''  & k=1,\dots,T/2-1, \\
	Z_{T-k}' - \I Z_{T-k}'' & k=T/2+1,\dots,T/2-1.
   \end{cases}
\end{equation}

\item By the inverse fast Fourier transform algorithm calculate
\begin{equation}
\label{eq6:simulation_iFFT}
X_t = \left(\frac{\pi}{T}\right)^{1/2}
\sum_{k=1}^T Z_k e^{\I t\omega_k},\qquad t=1,\dots,T.
\end{equation}
The formula \eqref{eq6:simulation_Z_definition} ensures that the sample of $\{Z_k\}$ is symmetric and thus inverse Fourier transform constitutes a real-valued functional time series, as will be proved later in Theorem~\ref{theorem6:abstract_method}.
\end{enumerate}

While the application of the inverse fast Fourier transform in Step 2 of the algorithm is computationally fast, the generation of the complex random elements $\{Z_k'\}$ in Step 1, whose covariance operators may in general have no structure in common, is not a trivial matter, and is discussed in the next three subsections, for three different specifications of the operator $\mathscr{F}^X_{\omega_k}$. In Subsection~\ref{subsec6:simulation_CKL}, these random elements are generated by their Karhunen-Lo\`eve expansions, therefore essentially enacting the Cram\'er-Karhunen-Lo\`eve representation \eqref{eq6:CKL_approx_truncated}.
On the other hand, the filtering specification discussed in Subsection~\ref{subsec6:simulation_filter} leverages the special structure of the filtered white noise spectral density operators to generate the random elements $\{Z_k\}$ efficiently. This approach is further tailored to simulation of FARFIMA processes in Subsection~\ref{subsec6:simulation_FARFIMA}.

Before moving on to the specifics, though, we establish that the sample generated by formula \eqref{eq6:simulation_iFFT} will indeed follow the correct dependence structure:
\begin{theorem}\label{theorem6:abstract_method}
Assume either of the two following conditions:
\begin{enumerate}[label=(\roman*)]
\item\label{item:theorem6:abstract_method:item_i}
The condition \eqref{eq6:weak_dependence_op_norm} holds and thus the spectral density operator $\{ \mathscr{F}^X_\omega \}_{\omega\in[0,2\pi]}$  exists in the sense \eqref{eq6:definition_spectral_density_operator}.
\item\label{item:theorem6:abstract_method:item_ii}
The weak spectral density operator $\mathscr{F}^X_\omega \in L^1([0,2\pi],\mathcal{L}_1(\HC))$ is continuous with respect to the norm $\|\cdot\|_1 $ on $(0,2\pi)$, and we additionally set $\mathscr{F}^X_0 = \mathscr{F}^X_{2\pi} = 0$.
\end{enumerate}
Then, the functional time series sample $X = \{X_t\}_{t=1}^T$ generated by \eqref{eq6:simulation_iFFT} is a real-valued stationary Gaussian time series of zero mean,
and asymptotically admits $\{\mathscr{F}_\omega\}$ as its spectral density operator when $T\to\infty$.
\end{theorem}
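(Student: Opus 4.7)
The plan is to verify each of the four asserted properties in turn, culminating in a Riemann-sum argument for the asymptotic spectral density claim. First, Gaussianity and zero mean are immediate: each $X_t$ is a fixed linear combination of the Gaussian primitives $\{Z_k', Z_k''\}$ with deterministic complex coefficients, and these primitives are centred, so $(X_1, \ldots, X_T)$ is jointly Gaussian with zero mean. Real-valuedness follows from the Hermitian symmetry built into \eqref{eq6:simulation_Z_definition}: the construction enforces $Z_{T-k} = \overline{Z_k}$ for $k = 1, \ldots, T/2-1$, while the $\sqrt{2}$-factor at $k = T/2, T$ produces real $Z_k$ there (consistent with $\mathscr{F}^X_0$ and $\mathscr{F}^X_\pi$ being themselves real operators, by the symmetry $\overline{\mathscr{F}^X_\omega} = \mathscr{F}^X_{2\pi - \omega}$ valid for real $X$). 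Combining this Hermitian symmetry with $\omega_{T-k} = 2\pi - \omega_k$ yields $\overline{X_t} = X_t$ directly from \eqref{eq6:simulation_iFFT}.

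To establish stationarity and the exact autocovariance structure, the key computation is
\[
\E\left[Z_k \otimes Z_{k'}\right] = \begin{cases} 2\mathscr{F}^X_{\omega_k}, & k = k', \\ 0, & k \ne k'. \end{cases}
\]
This identity follows from the mutual independence of the pairs $\{Z_j', Z_j''\}$ across $j$, the sesquilinearity of the tensor product, and $\E[Z_j' \otimes Z_j'] = \E[Z_j'' \otimes Z_j''] = \mathscr{F}^X_{\omega_j}$. The only potentially delicate case is the pair $(k, T-k)$ for $k < T/2$, where the contributions from $Z_{T-k}' \otimes Z_{T-k}'$ and $Z_{T-k}'' \otimes Z_{T-k}''$ enter with opposite signs and cancel exactly. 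Substituting into \eqref{eq6:simulation_iFFT}, the $t$-dependence vanishes and
\[
\E\left[X_{t+h} \otimes X_t\right] = \frac{2\pi}{T}\sum_{k=1}^{T} e^{\I h \omega_k} \mathscr{F}^X_{\omega_k}, \qquad h \in \mathbb{Z},
\]
which is independent of $t$, proving stationarity at every finite $T$.

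The asymptotic spectral density claim then reduces to showing that the right-hand side above, regarded as a right-endpoint Riemann sum for $\mathscr{R}^X_h = \int_0^{2\pi} e^{\I h \omega} \mathscr{F}^X_\omega \D\omega$, converges as $T \to \infty$ for every fixed $h$. Under case \ref{item:theorem6:abstract_method:item_i}, the hypothesis $\sum_h \|\mathscr{R}^X_h\|_{\mathcal{L}(\HC)} < \infty$ makes the defining series \eqref{eq6:definition_spectral_density_operator} uniformly convergent, so $\omega \mapsto \mathscr{F}^X_\omega$ is continuous in operator norm and classical Riemann-sum convergence delivers $\E[X_{t+h} \otimes X_t] \to \mathscr{R}^X_h$ in $\|\cdot\|_{\mathcal{L}(\HC)}$. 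Case \ref{item:theorem6:abstract_method:item_ii} is the main technical obstacle: the weak spectral density is only trace-norm continuous on the open interval $(0, 2\pi)$ and may blow up at the endpoints (as for FARFIMA with $d > 0$). The strategy is to split the Riemann sum into a piece over a compact subinterval $[\delta, 2\pi - \delta]$, on which uniform trace-norm continuity delivers Riemann-sum convergence, and two endpoint strips whose aggregate trace-norm mass is dominated by $\int_{[0,\delta] \cup [2\pi-\delta, 2\pi]} \|\mathscr{F}^X_\omega\|_1 \D\omega$; the latter quantity tends to zero as $\delta \downarrow 0$ thanks to $\int \|\mathscr{F}^X_\omega\|_1 \D\omega < \infty$, while the convention $\mathscr{F}^X_0 = \mathscr{F}^X_{2\pi} = 0$ neutralizes the single discretization points at the endpoints. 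A standard $\varepsilon$-$\delta$ interchange then yields $\E[X_{t+h} \otimes X_t] \to \mathscr{R}^X_h$ in trace norm.

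Finally, since the autocovariance operators of the generated sample converge to $\{\mathscr{R}^X_h\}_{h \in \mathbb{Z}}$, the limiting second-order structure is precisely that of a stationary process whose (weak) spectral density operator is $\mathscr{F}^X$, which completes the proof of the asymptotic claim.
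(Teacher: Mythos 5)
Your proposal is correct and follows essentially the same route as the paper's own proof: verify real-valuedness via the Hermitian symmetry of \eqref{eq6:simulation_Z_definition}, compute $\E[Z_k\otimes Z_l]$ (including the cancellation in the $(k,T-k)$ pair), reduce the autocovariance to the Riemann sum $\frac{2\pi}{T}\sum_k \mathscr{F}^X_{\omega_k}e^{\I h\omega_k}$, and pass to the limit. Your $\delta$-splitting argument for case \ref{item:theorem6:abstract_method:item_ii} is in fact more explicit than the paper, which simply asserts the Riemann-sum convergence under the $L^1$ assumption.
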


Theorem~\ref{theorem6:abstract_method} is proved in Appendix~\ref{subsec:proof_of_thm:abstract_method}.\\


Due to the periodicity of Fourier transform, the values $X_1$ and $X_T$ will tend to be similar which might be an undesirable trait, depending on the application. To overcome this artefact, \citet{mitchell1981generating,percival1993simulating} propose to simulate a sample of length $\tilde{T} = k T$ for some integer $k\geq 2$ and sub-sample a functional time series of length $T$.

\subsection{Simulation under Spectral Eigendecomposition Specification}
\label{subsec6:simulation_CKL}

Perhaps the most direct means to generate (approximate versions of) the random elements $\{Z_k\}$ considered in Step 1 of the algorithm introduced at the beginning of Section~\ref{sec6:simulation_in_spectral_domain} is by means of a finite rank approximation to the spectral density operator at the corresponding frequencies, appearing in the definition (see equation \eqref{eq6:simulate_Z_covariance_operator}). For a given rank, the optimal such approximation is obtained by truncating the eigenexpansion \eqref{eq6:spectral_density_operator_harmonic_decomposition} at that value, thus using a finite number of the harmonic eigenfunctions and corresponding eigenvalues to approximately generate $\{Z_k\}$.

Concretely, denoting $\{\lambda_n(\omega)\}_{n=1}^\infty$ and $\{\varphi_n(\omega)\}_{n=1}^\infty$ the harmonic eigenvalues and the harmonic eigenfunctions of the spectral density operator $\mathscr{F}^X_\omega$ at the frequency $\omega\in[0,2\pi]$, we may generate exact versions of $Z_k$ by setting
\begin{equation}\label{eq6:simulate_Z_CKL}
Z_k = \sum_{n=1}^\infty \sqrt{\lambda_n(\omega_k) } \varphi_n(\omega_k) \xi_n^{(k)}
\end{equation}
where $\{ \xi_n^{(k)}\}$ is an ensemble of i.i.d. standard Gaussian real-valued random variables.
The random elements defined by \eqref{eq6:simulate_Z_CKL} clearly satisfy the requirement \eqref{eq6:simulate_Z_covariance_operator}. In practice one has to truncate the series in \eqref{eq6:simulate_Z_CKL} at a finite level, say $N$. This truncation is optimal in terms of preserving the second order structure of the functional time series (Proposition~\ref{prop:optimality_CKL}) and requires only a low number of inexpensive operations. If we are to evaluate the functional time series $X$ on a spatial grid of $[0,1]$ at resolution $M\in\mathbb{N}$, the simulation requires $O(N M T + M T\log T)$ operations, provided we have direct access to the decomposition \eqref{eq6:spectral_density_operator_harmonic_decomposition}.  The $O(M T\log T)$ comes from the inverse fast Fourier transform \eqref{eq6:simulation_iFFT}.

When the decomposition \eqref{eq6:spectral_density_operator_harmonic_decomposition} is not directly available, as for example is the case for the \FARMA{} process with non-trivial autoregressive part, the evaluation of the spectral density operator \eqref{eq6:FARMA_spectral_density_operator} requires inversion of a bounded linear operator different at each frequency $\omega$. Unless a special structure of the autoregressive operator is assumed (e.g. as in Example~\ref{example6:long_range_FARFIMA}), the evaluation of this inversion is expensive. One could discretise the operator on a grid of $[0,1]^2$ and invert the resulting matrix, but this will become slow for dense grids, especially considering to do it for each frequency $\omega_k,\,k=1,\dots,T/2,T$. Moreover, to obtain the harmonic eigenvalues and eigenfunctions \eqref{eq6:spectral_density_operator_harmonic_decomposition} one would need to perform the eigendecomposition at each frequency  $\omega_k$ which is also slow for large matrices. These operations, if performed on a spatial grid of resolution $M\times M$, require $O(M^3)$ operations, bringing the overall cost to $O(M^3 T + M T\log T)$. This can be reduced by calling a truncated eigendecomposition algorithm instead, e.g. the truncated singular value decomposition (SVD) algorithm, and evaluating only $N < M$ eigenfunctions. This yields computational gains when $N\ll M$,  namely reducing the complexity of the said operations from $O(M^3)$ to $O(NM^2)$, and the overall cost to $O(N M^2 T + M T\log T)$.

Though the simulation cost is high when the decomposition \eqref{eq6:spectral_density_operator_harmonic_decomposition} is not directly available, the approach still constitutes a general method to simulate a functional time series with arbitrary spectrum. Example~\ref{example6:custom_karhunen_loeve} illustrates a functional time series whose dynamics are defined through its Cram\'er-Karhunen-Lo\`eve expansion where we show that simulation is possible even when we do not leverage our knowledge of this expansion, but rather calculate it numerically.

Finally, it is worth remarking that even though the functions $\{\varphi_n(\omega)\}_{n=1}^\infty$ appearing in \eqref{eq6:spectral_density_operator_harmonic_decomposition} are orthonormal for each $\omega\in[0,2\pi]$, orthonormality is not \emph{required} for the correct simulation of $Z_k$'s by \eqref{eq6:simulate_Z_CKL}. In other words, a practitioner can specify a spectral density operator by a sum similar to \eqref{eq6:spectral_density_operator_harmonic_decomposition} without insisting on using orthonormal functions, and still achieve rapid simulation in the spectral domain.

\subsection{Simulation under Filtering Specification.}
\label{subsec6:simulation_filter}

The second implementation of Step 1 of the abstract algorithm introduced at the beginning of Section~\ref{sec6:simulation_in_spectral_domain} leverages a set-up where a white noise with covariance operator $\mathcal{S}$ is plugged into a filter with given frequency response function $\Theta(\omega)$ in which case the spectral density operator is given directly by the formula
\begin{equation}\label{eq6:simulation_filter_spec_density}
\mathscr{F}^X_\omega = \frac{1}{2\pi} \Theta(\omega) \mathcal{S} \Theta(\omega)^*, \qquad\omega\in[0,2\pi],
\end{equation}
where $\mathcal{S}$ is a positive-definite self-adjoint trace class operator and $\Theta : [0,2\pi] \to \mathcal{L}(\HC)$, i.e. $\Theta(\omega)$ is a bounded linear operator on $\HC$ for each $\omega\in[0,2\pi]$. We only require that
$$ \int_0^{2\pi} \left\| \Theta(\omega) \right\|^2_{\mathcal{L}(\HC)} \D\omega <\infty$$
and $\Theta(\omega)g = \overline{\Theta(2\pi-\omega)(g)}$ for $\omega\in[0,\pi]$ and $g\in\HC$,
which implies that $\{X\}$ is a stationary mean-zero functional time series with the weak spectral density operator $\mathscr{F}^X_\omega \in L^1([0,2\pi],\mathcal{L}_1(\HC))$.

The operator $\mathcal{S}$, being a positive-definite self-adjoint trace class operator, admits the decomposition
\begin{equation}\label{eq6:simulation_filtered_S_decomposition}
\mathcal{S} = \sum_{n=1}^\infty \eta_n e_n \otimes e_n
\end{equation}
where $\{\eta_n\}$ are the eigenvalues and $\{e_n\}$ are the eigenfunctions of $\mathcal{S}$.

We may simulate real random elements $\{Y_k\}$ by setting
\begin{equation}\label{eq6:simulation_filtered_Y_k}
\sum_{n=1}^\infty \sqrt{\eta_n} e_n \tilde{\xi}_n^{(k)}
\end{equation}
with an ensemble $\{\tilde{\xi}_n^{(k)}\}$ of i.i.d. standard Gaussian random variables. In reality, the sum \eqref{eq6:simulation_filtered_Y_k} is truncated at some $N\in\mathbb{N}$.

If the decomposition \eqref{eq6:simulation_filtered_S_decomposition} is unknown, it can be numerically calculated by discretisation of the kernel corresponding to the operator $\mathcal{S}$ on the grid of $[0,1]^2$, say constituting an $M\times M$ matrix, and numerically calculating its eigendecomposition, in which case we may select $N=M$ eigenvalues.
The advantage of this approach over numerical evaluation of the spectral density operators at each $\omega$, performing the numerical eigendecomposition of each spectral density operator, and applying the Cram\'er-Karhunen-Lo\`eve-based simulation algorithm presented in Subsection~\ref{subsec6:simulation_CKL} is that the filtered white noise approach requires only one runtime of this expensive step.

Having defined the random elements $\{Y_k\}$ by \eqref{eq6:simulation_filtered_Y_k}, we define the elements $\{Z_k'\}$ in the notation of the algorithm presented at the beginning of Section~\ref{sec6:simulation_in_spectral_domain} by putting
\begin{equation}\label{eq6:simulation_filtered_Z_k}
Z_k' = \frac{1}{\sqrt{2\pi}} \Theta(\omega_k) Y_k, \qquad k=1,\dots,T/2,T.
\end{equation}
Such $\{Z_k'\}$ obviously satisfy \eqref{eq6:simulate_Z_covariance_operator}.

If the decomposition \eqref{eq6:simulation_filtered_S_decomposition} is unknown and we opt to numerically evaluate it on a grid of size $M$, the total computational complexity turns out to be $O(M^3 + M^2 T + M T\log T)$ where $O(M^2 T)$ comes from the matrix application \eqref{eq6:simulation_filtered_Z_k} and $O(M T\log T)$ from the inverse fast Fourier transform \eqref{eq6:simulation_iFFT}.

\subsection{Simulation under Linear Time Domain Specification}
\label{subsec6:simulation_FARFIMA}

One of the typical functional time series dynamics specifications is a linear process in the time domain. In this subsection we consider the flexible class of the \FARFIMA{} processes, one of the most general classes of such linear processes, and show how to generate their trajectories by spectral domain simulation methods.

The \FARFIMA{} processes, thanks to being defined as a linear filter of white noise, admit the spectral density operators of the form \eqref{eq6:simulation_filter_spec_density}. However, the application of the simulation algorithm presented in Subsection~\ref{subsec6:simulation_filter} requires the frequency response function $\Theta(\omega)$ to be readily available, which is not always the case: the \FARFIMA{} (or \FARMA{}) process with a non-degenerate autoregressive part admit the frequency response function given by the formula prompting operator inversion:
\begin{equation}\label{eq6:FARIMA_frequency_response_function}
\Theta(\omega) =
\mathscr{A}( e^{-\I\omega} )^{-1}
\mathscr{B}( e^{-\I\omega} ),\qquad
\omega\in[0,2\pi].
\end{equation}

Therefore a naive implementation would require inversion of the linear bounded operator $\mathscr{A}( e^{-\I\omega} )$ for each frequency $\omega$. It may very well happen that $\mathscr{A}( e^{-\I\omega} )$ has special structure, e.g. as is the case for the FARFIMA(1,d,0) process considered in Example~\ref{example6:long_range_FARFIMA}, in which case the inversion evaluation is rapid. In the general case, however, the inversion on a spatial domain discretisation would require $O(M^3)$ operations where $M$ is the discretisation resolution. Fortunately, there are two ways to avoid this computational cost:
\begin{itemize}
\item A \textit{fully spectral approach} which consists in the efficient evaluation of \eqref{eq6:simulation_filtered_Z_k}. The discretization of this formula for the \FARFIMA{} process involves evaluation of
\begin{equation}\label{eq6:simulation_FARIMA_get_V_k}
Z_k = \frac{[2\sin(\omega/2)]^{-d}}{\sqrt{2\pi}} \mathbf{A}( e^{-\I\omega_k} )^{-1} \mathbf{B}( e^{-\I\omega_k} ) Y_k
\end{equation}
where the matrices $\mathbf{A}( e^{-\I\omega_k} )$ and $\mathbf{B}( e^{-\I\omega_k} )$ are the discretizations of $\mathscr{A}( e^{-\I\omega_k} )$ and $\mathscr{B}( e^{-\I\omega_k} )$ respectively.
The numerical evaluation of \eqref{eq6:simulation_FARIMA_get_V_k} requires solving the matrix equation with the matrix $\mathbf{A}( e^{-\I\omega_k} )$ and the right-hand side vector of $ \mathbf{B}( e^{-\I\omega_k} ) Y_k$, thus resulting in $O(M^2)$ complexity, as opposed to the $O(M^3)$ complexity of matrix inversion.


\item A \textit{hybrid simulation approach}, where we simulate the \FARFIMA{} processes by simulating the corresponding FARFIMA$(0,d,q)$ process in the spectral domain and then applying the autoregressive recursion in the time-domain. Concretely, we:
\begin{enumerate}
\item Choose a burn-in length $\tilde{T}$, and simulate a FARFIMA$(0,d,q)$ process with degenerate autoregressive part, denoted as $X'_1,\dots,X'_{T+\tilde{T}}$, by the means of the tools in Subsection~\ref{subsec6:simulation_filter}. Such a functional time series admits the spectral density operator
$$ \mathscr{F}_\omega^{X'} = \frac{\left[2\sin(\omega/2)\right]^{-2d}}{2\pi} \mathscr{B}( e^{-\I\omega} ) \mathcal{S} \mathscr{B}( e^{\I\omega} )^* $$
whose corresponding frequency response function $\Theta(\omega) = [2\sin(\omega/2)]^{-d} \mathscr{B}( e^{-\I\omega} )$ can be evaluated fast.
\item Set $X_1,\dots, X_p = 0$ and run the recursion
$$ X_t = \mathcal{A}_1 X_{t-1} + \dots + \mathcal{A}_p X_{t-p} + X'_t, \qquad t = p+1,\dots,T+\tilde{T}.$$
\item Discard the first $\tilde{T}$ values of $X_1,\dots,X_{T+\tilde{T}}$ and keep only the last $T$ elements.
\end{enumerate}
\end{itemize}

Both the fully spectral and the hybrid implementations involve the numerical eigendecomposition of the noise covariance operator $\mathcal{S}$, incurring an $O(M^3)$ computation cost, the applications of matrices on vectors or solving linear equations, yielding $O(M^2 T)$ operations, and the inverse fast Fourier transform at each point of the discretisation with the $O(M T\log T)$ complexity. Thus the total computational complexity is $O(M^3 + M^2T + M T\log T)$.
Nevertheless, even though the application of a matrix on a vector has the same complexity as solving a linear system of equations, the constant hidden in the ``$O$" is different and the hybrid simulation method is faster than the fully spectral approach, which requires the solution of linear systems at each frequency, as the simulation study in Example~\ref{example6:FARMA_lowrank} demonstrates.

\section{Examples and Numerical Experiments}
\label{sec6:examples}

This section presents three examples of functional time series specified according in various ways, similarly to the last three section. Thus, the spectral density operator may be directly or indirectly defined, depending on the scenario. The examples are accompanied by a small simulation study assessing the simulation speed and the simulation accuracy by comparing the lagged autocovariance operators of the simulated processes with the ground truth. The purpose of the simulation study is to illustrate the performance of the method in terms of speed and accuracy, and draw some qualitative conclusions about the choice of methods and parameters, rather than to provide with an extensive quantitative comparison.

A parallel objective is to provide code that is accessible (Section~\ref{sec:code_availability}), simple to run, and easy to tailor for custom-defined spectral density operators used in functional time series research.

\subsection{Specification by Spectral Eigendecomposition}
\label{example6:custom_karhunen_loeve}
Consider the spectral density operator defined by its eigendecomposition
\begin{align}
\label{eq6:custom_CKL_spec_density_operator_sum}
\mathscr{F}^X_\omega &= \sum_{n=1}^\infty \lambda_n(\omega) \varphi_n(\omega)\otimes \varphi_n(\omega), \qquad \omega\in[0,2\pi],\\
\nonumber
\lambda_n(\omega) &= \frac{1}{ (1-0.9 \cos(\omega)) \pi^2 n^2}, \qquad \omega\in[0,2\pi],\\
\nonumber
\left(\varphi_n(\omega)\right)(x) &=
\begin{cases}
\sqrt{2} \sin(  n (\pi \delta_{\omega/\pi}(x) ),& \quad x\in[0,1],\quad \omega\in[0,\pi], \\
\sqrt{2} \sin(  n (\pi \delta_{-\omega/\pi}(x)) ),& \quad x\in[0,1],\quad \omega\in(\pi,2\pi],
\end{cases} 
\end{align}
where  $$\delta_a(\cdot) = x-a \mod 1$$ is the periodic shift by $a\in\mathbb{R}$ with ``mod" denoting the modulo operation, the remainder after the division.
Under such definition, which guarantees that $\delta_a(x) \in [0,1]$, the harmonic eigenfunctions at distinct frequencies are phase-shifted versions of each other. It turns out that the spectral density operator given by the sum \eqref{eq6:custom_CKL_spec_density_operator_sum} can be expressed in closed analytical form, as an integral operator with kernel
\begin{equation}
\label{eq6:custom_CKL_spec_density_kernel}
f^X_\omega(x,y) =
\begin{cases}
\frac{1}{ (1-0.9 \cos(\omega))} K_{BB}( \delta_{\omega/\pi}(x), \delta_{\omega/\pi}(x) ),& \omega\in[0,\pi],\\
\frac{1}{ (1-0.9 \cos(\omega))} K_{BB}( \delta_{-\omega/\pi}(x), \delta_{-\omega/\pi}(x) ),& \omega\in(\pi,2\pi].
\end{cases}
\end{equation}
where $K_{BB}(\cdot,\cdot)$ is the covariance kernel of Brownian bridge  \citep{deheuvels2003} defined as
$$
K_{BB}(x,y) = \min(x,y) - xy,\qquad x,y\in[0,1].
$$

Figure~\ref{fig6:custom_CKL_trajectory} illustrates the simulated trajectories with varying number of the harmonic principal components $N$ used in the truncation of the sum \eqref{eq6:simulate_Z_CKL} when simulating by the means presented in Subsection~\ref{subsec6:simulation_CKL}.

\begin{figure}
\centering
\includegraphics[width=1\textwidth]{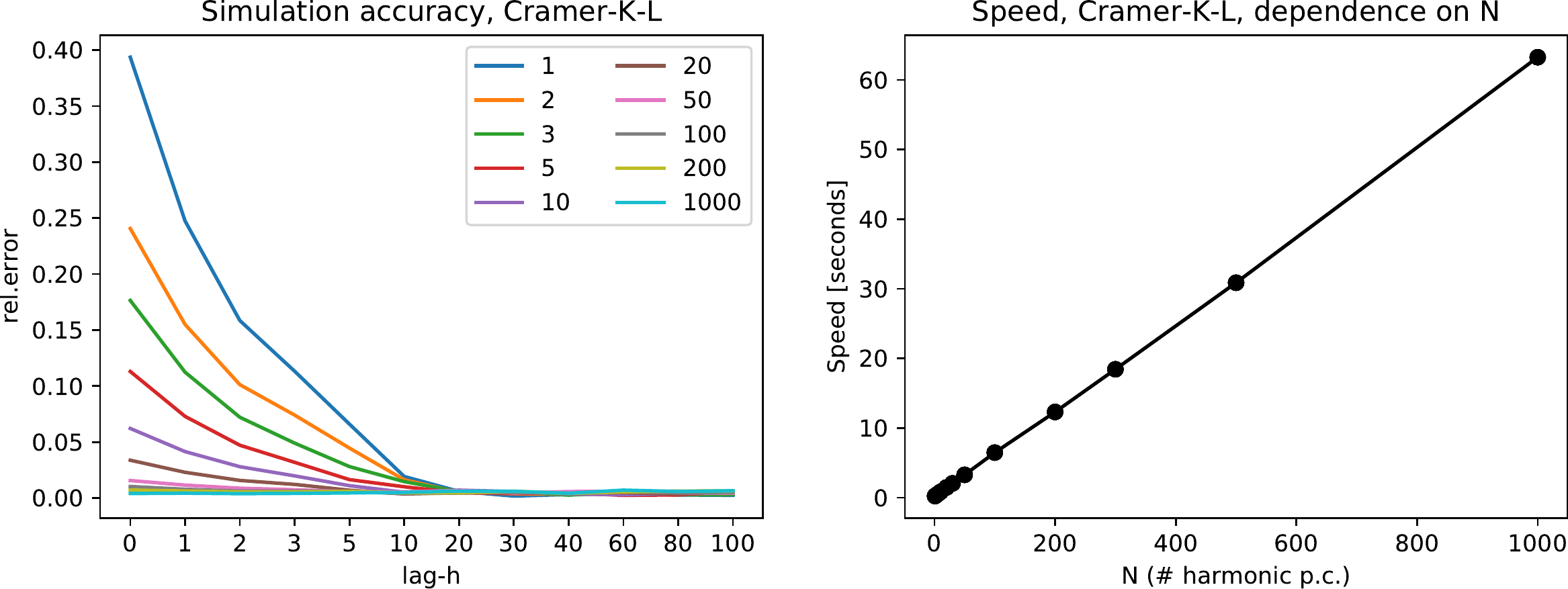}
\caption[Simulation accuracy and speed of the Cram\'er-Karhunen-Lo\`eve method for the simulation in Example~\ref{example6:custom_karhunen_loeve}]{
The simulation accuracy \eqref{eq6:RMSE_simulation} and speed of the process defined in Example~\ref{example6:custom_karhunen_loeve}.
\textbf{Left:} The simulation accuracy for lag-$h$ autocovariance operator with varying number of harmonic principal components used $N\in\{1,2,3,5,10,20,50,100,200,1000\}$ visualised as a function of the lag $h\in\{0,1,2,3,5,10,20,30,40,60,80,100\}$. The sample size parameters are set $T=1000$ and $M=1001$.
\textbf{Right:} The simulation speed as a function of $N$ with fixed $T=1000$ and $M=1001$.
}
\label{fig6:custom_CKL_accuracy}
\end{figure}

\begin{figure}
\centering
\includegraphics[width=1\textwidth]{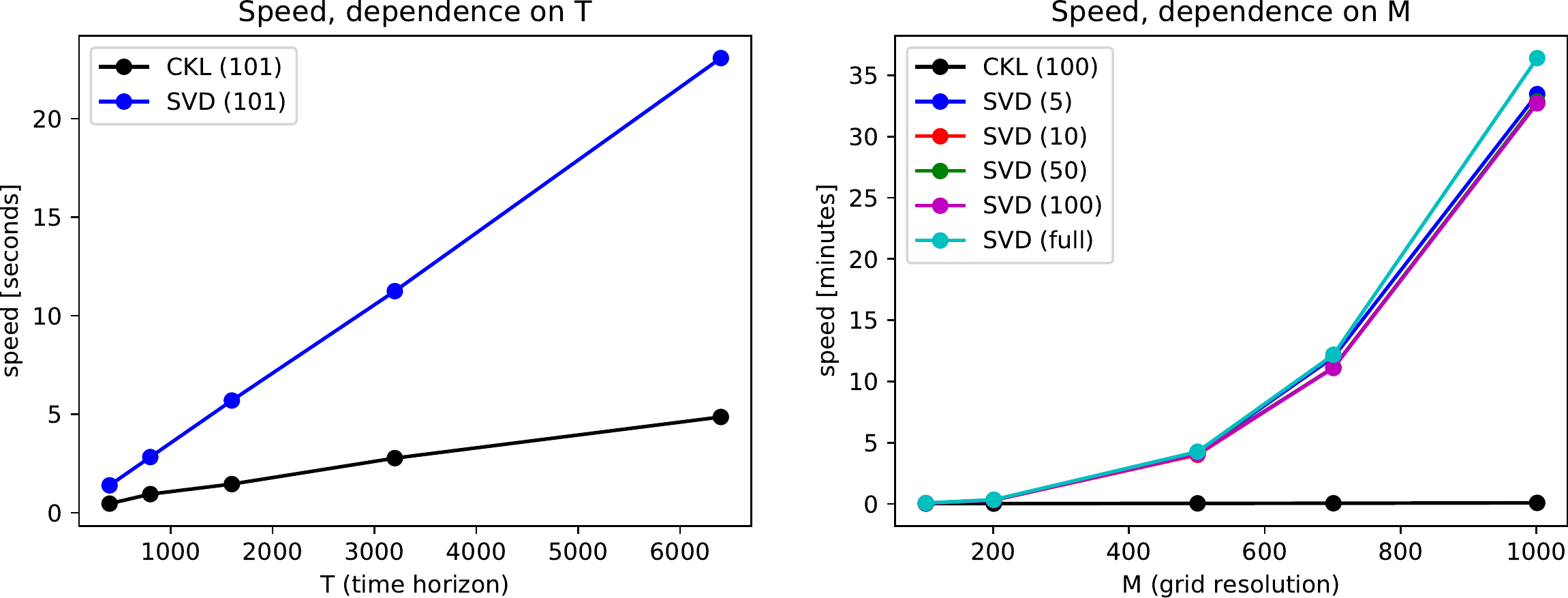}
\caption[Simulation speed for the simulation in Example~\ref{example6:custom_karhunen_loeve}]{
The simulation speed of the process defined in Example~\ref{example6:custom_karhunen_loeve}.
\textbf{Left:} The dependence on varying the time horizon $T\in\{400,800,1600,3200,6400\}$ while setting the spatial resolution $M=101$. Both the simulation using the known Cram\'er-Karhunen-Lo\`eve expansion (\textsc{CKL}) and the method calculating this decomposition by the \textsc{SVD} algorithm use $N=101$ eigenfunctions.
\textbf{Right:} The dependence on varying $M\in\{101,201,501,701,1001\}$ while setting $T=1000$. The simulation using the known Cram\'er-Karhunen-Lo\`eve expansion (\textsc{CKL}) uses $100$ eigenfunctions while the numerical \textsc{SVD ($N$)} decomposition finds $N\in\{5,10,50,100\}$ leading eigenfunctions (the lines mostly overlap each other) or all of them $N=M$ for \textsc{SVD (full)}. The \textsc{CKL} method has the running time below 0.1~minutes (6~seconds) even for $M=1001$.
}
\label{fig6:custom_CKL_speed}
\end{figure}

\begin{figure}
\centering
\includegraphics[width=1\textwidth]{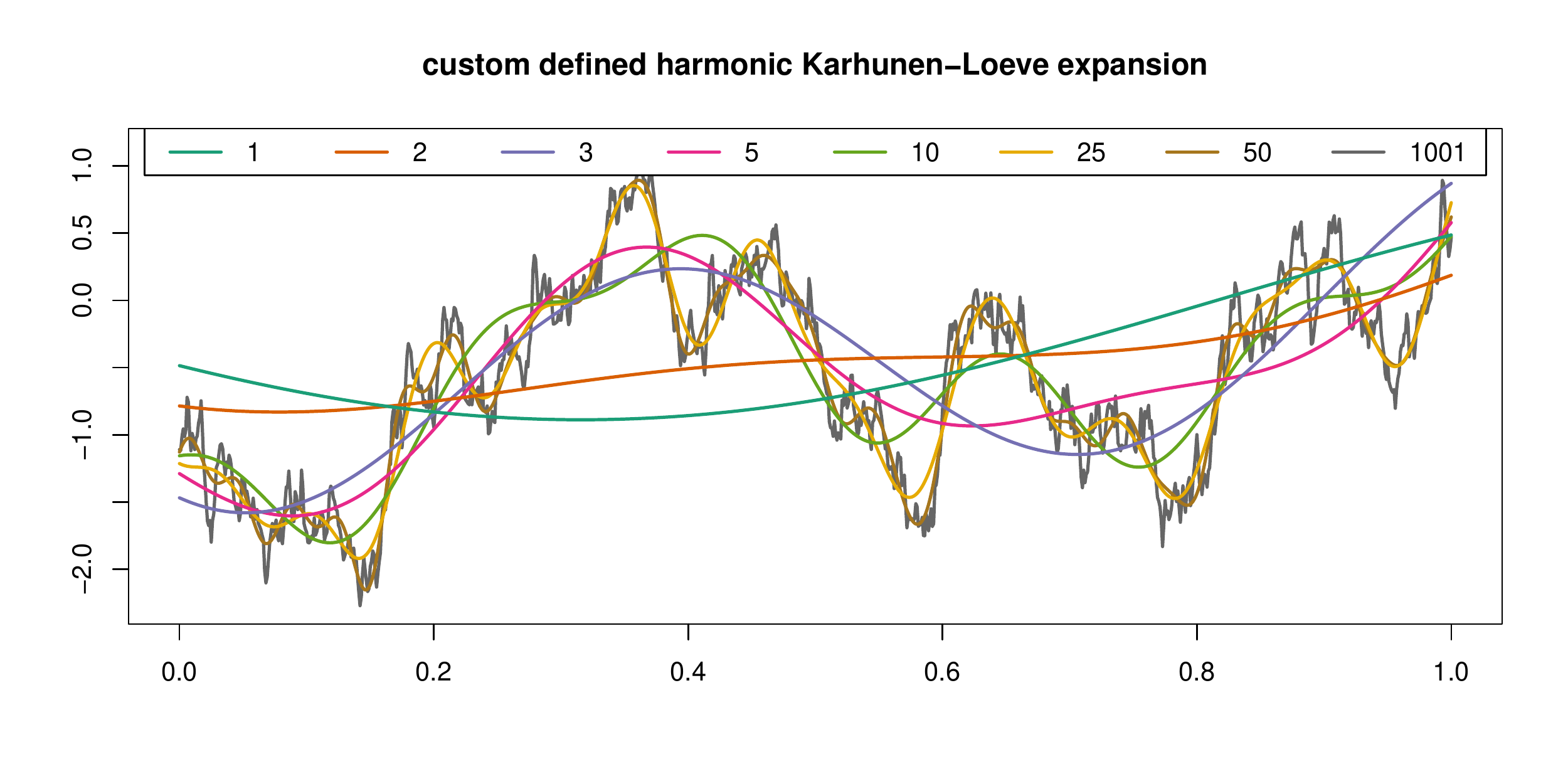}
\caption[Simulated trajectories of the process defined in Example~\ref{example6:custom_karhunen_loeve}]{
Sample trajectories $X_1(\cdot)$ of the process defined in Example~\ref{example6:custom_karhunen_loeve} with varying number of harmonic principal components $N$ chosen in the truncation of \eqref{eq6:simulate_Z_CKL}. Simulated with $T=100$ and the grid resolution $M=1001$.
}
\label{fig6:custom_CKL_trajectory}
\end{figure}

In order to assess the simulation accuracy we opt to: simulate $I=1000$ independent realisations of the process $\{X^{(1)}_t\}_{t=1}^T,\dots,\{X^{(I)}_t\}_{t=1}^T$; evaluate its empirical autocovariance operators $\hat{\mathscr{R}}^X_{h,[i]}$ for each $i=1,\dots,I$ and some lags $h$; and define the average empirical autocovariance operator $\overline{\mathscr{R}^X_h} = \frac{1}{I}\sum_{i=1}^I \hat{\mathscr{R}}^X_{h,[i]}$. We then compare this with the true covariance operator $\mathscr{R}^X_h$ by calculating
\begin{equation}\label{eq6:RMSE_simulation}
rel.error(h) = \frac{\left\| \overline{\mathscr{R}^X_h} - \mathscr{R}^X_h \right\|_1}{\left\| \mathscr{R}^X_0 \right\|_1}, \qquad\text{for some lags}\,\, h.
\end{equation}
The true autocovariance operators $\mathscr{R}^X_h$ were calculated by numerically integrating \eqref{eq6:spectral_density_operator_inverse_formula}.

Figure~\ref{fig6:custom_CKL_accuracy} the manner of error decay as $N\to\infty$ and the number of harmonic components $N=100$ seems to be satisfactory. The relative simulation errors for $N>100$ seem to be dominated by the random component of \eqref{eq6:RMSE_simulation} rather than the simulation error itself.
We note that the spectral density operator \eqref{eq6:custom_CKL_spec_density_operator_sum} is non-differentiable near the spatial diagonal, and consequently features a relatively slow (quadratic) decay of its eigenvalues. It thus represents one of the more challenging cases one might wish to simulate from in an FDA context: functional data analyses typically feature smooth curves and differentiable corresponding operators, including spectral density operators, admitting a faster quicker eigenvalue requiring $N\ll 100$ eigenfunctions to capture a substantial amount of their variation.

Figure~\ref{fig6:custom_CKL_speed} presents the simulation speed results with varying sample size parameters: the time horizon $T$ and the spatial resolution $M$. We compared the simulation using the known Cram\'er-Karhunen-Lo\`eve decomposition \eqref{eq6:custom_CKL_spec_density_operator_sum} with the method finding this decomposition numerically starting from the kernel 
\eqref{eq6:custom_CKL_spec_density_kernel}. Such method finds the harmonic eigendecomposition using the (truncated) SVD algorithm applied to discretization of \eqref{eq6:custom_CKL_spec_density_kernel}. Figure~\ref{fig6:custom_CKL_speed} shows that such routine can become very costly for higher spatial resolutions $M$, but if no other method is available, the method still constitutes an general approach how to simulate process with any dynamics structure defined through weak spectral density operators.

\subsection{Long-range Dependent \FARFIMA{} Process}
\label{example6:long_range_FARFIMA}

The next example is sourced from the work of \citet{li2019long} and \citet{shang2020comparison} on long-rang dependent functional time series. They consider the FARFIMA(1,0.2,0) process defined by \eqref{eq6:FARIMA_def} with the autoregressive operator $\mathcal{A}_1$ and the innovation covariance operator $\mathcal{S}$ defined as integral operators with respective kernels
\begin{align}
\label{eq6:example_FARIMA_def_A1}
A_1(x,y) &= 0.34 \exp\left\{ (x^2+y^2)/2 \right\}, \qquad x,y\in[0,1],\\
\label{eq6:example_FARIMA_def_S}
S(x,y) &= \min(x,y), \qquad x,y\in[0,1],
\end{align}
depicted in Figure~\ref{fig6:FARIMA_kernels}. Recall that $S(x,y)= \min(x,y)$ is the covariance kernel of the standard Brownian motion on $[0,1]$.
Because $d=0.2 > 0$, the process exhibits long-rang dependence \citep{li2019long}.

The constant $0.34$ ensures that condition \eqref{eq6:FARMA_condition} is satisfied, and thus the process is stationary and admits a weak spectral density operator (Theorem~\ref{thm:FARFIMA-part-ii}) given by
\begin{equation}\label{eq6:example_FARIMA_spec_density}
\mathscr{F}^X_\omega = \frac{\left[2\sin(\omega/2)\right]^{-2d}}{2\pi}
\left( \Id - \mathcal{A}_1 e^{-\I\omega} \right)^{-1}
\mathcal{S}
\left( \Id - \mathcal{A}_1^* e^{\I\omega} \right)^{-1},
\qquad\omega\in[0,2\pi].
\end{equation}
In fact, the operator $\mathcal{A}_1$ is of rank 1 and can be written as $\mathcal{A}_1 = -0.34 g\otimes g$ with $g(x)=\exp( x^2/2 ),\,x\in[0,1]$. This fact hugely simplifies the evaluation of \eqref{eq6:example_FARIMA_spec_density} because the inversion of the autoregressive part can be written by the Sherman–Morrison formula as
\begin{equation}\label{eq6:sherman-morrison}
\left( \Id - \mathcal{A}_1 e^{-\I\omega} \right)^{-1} = \Id +
\frac{0.34 e^{-\I\omega}}{1-0.34 e^{-\I\omega} \|g\|^2_{\LtwoR}} g\otimes g,
\qquad\omega\in[0,2\pi],
\end{equation}
thus allowing for fast evaluation. Further computation gains, though less considerable, are made by using the Mercer decomposition of the Brownian motion covariance kernel \citep{deheuvels2003}
\begin{equation}\label{eq6:brownian_motion_KL}
S(x,y) = \sum_{n=1}^\infty \frac{1}{\left[(n-0.5)\pi\right]^2} \sqrt{2} \sin\left\{ (n-0.5) \pi x \right\} \sqrt{2} \sin\left\{ (n-0.5) \pi y \right\},\qquad x,y\in[0,1],
\end{equation}
instead of numerical evaluation on a grid followed by an SVD decomposition.

In what follows, we consider the following implementations the spectral and time-domain, and hybrid simulation methods:

\begin{itemize}
\item \textsc{spectral (bm)}: This method uses the known Mercer decomposition of the Brownian motion (\textsc{bm}) kernel \eqref{eq6:brownian_motion_KL} and simulates the process in the spectral domain using the method of Subsection~\ref{subsec6:simulation_filter}
with the help of the Sherman-Morrison formula \eqref{eq6:sherman-morrison}.
\item \textsc{hybrid (bm)}: This method again uses the known Mercer decomposition of the Brownian motion (\textsc{bm}) kernel \eqref{eq6:brownian_motion_KL} and simulates the FARFIMA$(0,d,0)$ process and then applies the autoregressive recustion in the time-domain as explained in Subsection~\ref{subsec6:simulation_FARFIMA}, thus constituting a \textsc{hybrid} simulation method combining spectral and time-domain.
\item \textsc{spectral (svd)}, \textsc{hybrid (svd)}: These method correspond to \textsc{spectral (bm)} and \textsc{hybrid (bm)} but the Mercer decomposition of the Brownian motion kernel is calculated numerically using the \textsc{svd} algorithm.
\item \textsc{temporal}: We use the original code by \citet{li2019long} available in the on-line supplement of their article and treat is as the benchmark for comparison with our spectral simulation methods. They simulate the realisations of the process by discretising the space domain $[0,1]$ and evaluating the integral operator $\mathcal{A}_1$ as a sum on this grid. Moreover, they perform the fractional integration \eqref{eq6:FARIMA_def} by analytically calculating the filter coefficients in the time-domain and thus expressing the process as FMA($\infty$), the functional moving average process of infinite order. Details on the FMA($\infty$) representation can be found in \citet{li2019long,hosking1981fractional}. The computational complexity of this method is $O(M^2 T^2)$.
\end{itemize}

In order to assess the simulation accuracy we opt to simulate $I=100$ independent realisations, and compare the mean empirical autocovariance operators \eqref{eq6:RMSE_simulation} with the true autocovariance operator for varying $T\in\{400,800,1600,3200,6400\}$ and $M\in\{101,201,501,1001\}$.
We simulate the process with varying parameter $T$, the time horizon of the simulation, as well as varying spatial resolution $M$, based on a regular grid $\{x_m = (m-1)/(M-1)\}_{m=1}^M \subset [0,1]$.  
The simulation accuracy error, reported in Figure~\ref{fig6:FARFIMA} (in Appendix~\ref{sec:supplementary_figures}), is negligible for all the simulation methods and \eqref{eq6:RMSE_simulation} is dominated rather by the random component,  which is higher for smaller $T$.

Figures~\ref{fig6:FARIMA_speed} summarise how fast the different simulation methods were. It is obvious that the simulation by the \textsc{temporal} method used by \citet{li2019long} scales badly in $T$, while the other methods are linear in $T$, performing significantly better.
On the other hand, the \textsc{spectral (bm)}, \textsc{hybrid (bm)}, and \textsc{temporal} methods taking advantage of the innovation error covariance eigendecomposition have complexity dominated by $O(M^2 T)$ and scale similarly. The \textsc{spectral (svd)} and \textsc{hybrid (svd)} methods require a further $O(M^3)$ operations for the SVD algorithm and this contribution becomes visible for $M\in\{501,1001\}$.

\begin{figure}
\centering
\includegraphics[width=1\textwidth]{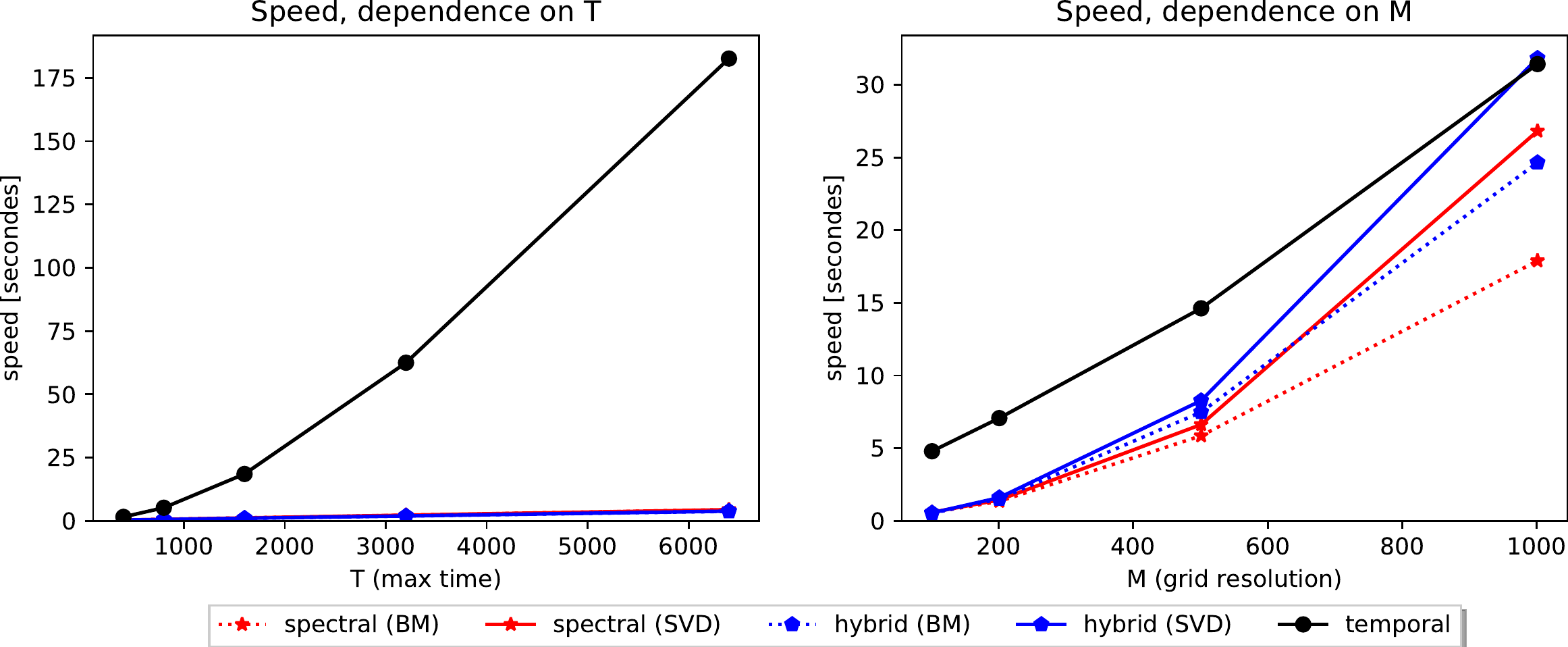}
\caption[The simulation speed for the FARFIMA(1,0.2,0) in Example~\ref{example6:long_range_FARFIMA}]{
The dependence of the \textbf{simulation speed} for the long-range dependent FARFIMA(1,0.2,0) process defined in Example~\ref{example6:long_range_FARFIMA} on the simulation parameters. 
\textbf{Left:} The simulation speed for varying time horizon $T\in\{400,800,1600,3200,6400\}$ with the spatial resolution is set $M=101$.
\textbf{Right:} The dependence of the simulation speed  on the grid size $M\in\{101,201,501,1001,\}$ with $T=800$.
}
\label{fig6:FARIMA_speed}
\end{figure}

\subsection{\FARMA{} Process with Smooth Parameters}
\label{example6:FARMA_lowrank}

\begin{figure}[ht]
\centering
\includegraphics[width=1\textwidth]{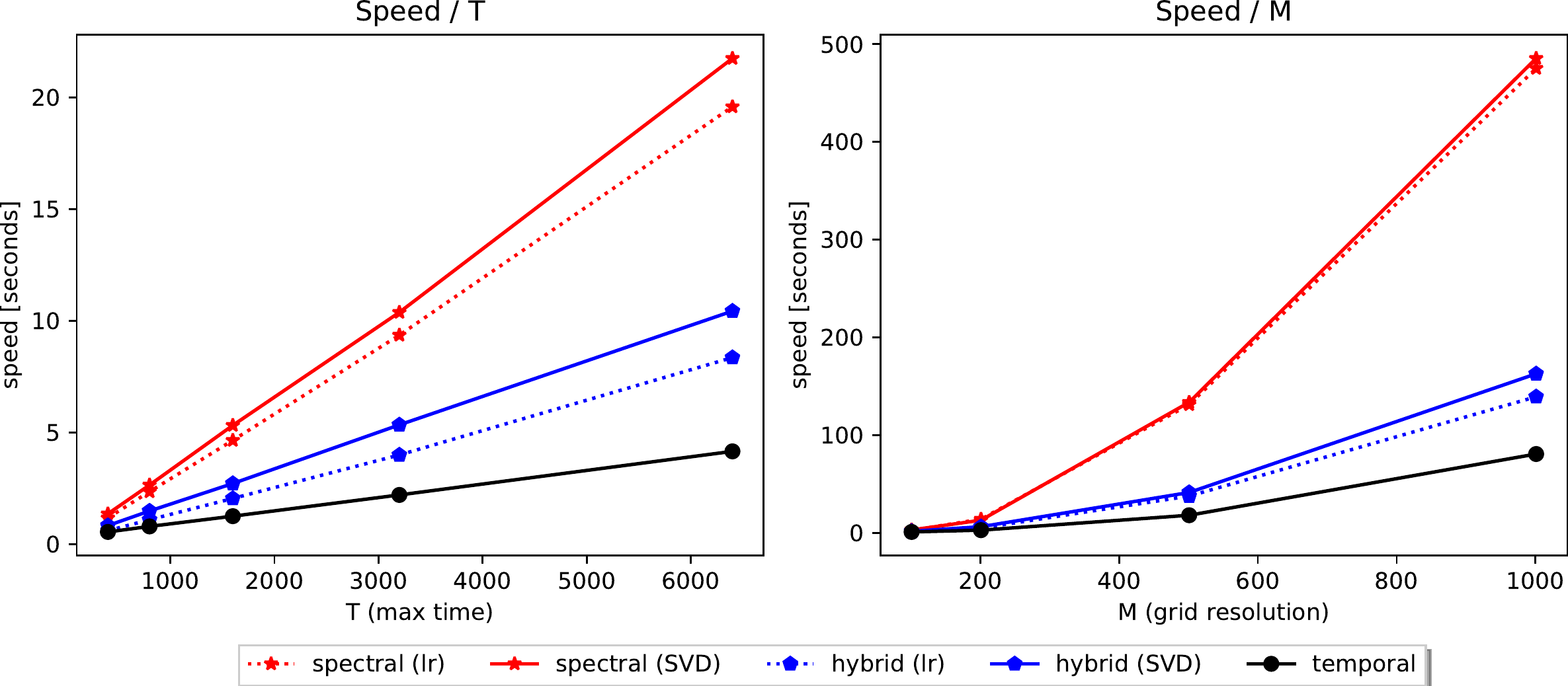}
\caption[The simulation speed for the FARMA(4,3) in Example~\ref{example6:FARMA_lowrank}]{
The dependence of the \textbf{simulation speed} for the FARMA(4,3) process defined in Example~\ref{example6:FARMA_lowrank} on the simulation parameters.
\textbf{Left:} The simulation speed for varying time horizon $T\in\{400,800,1600,3200,6400\}$ with the spatial resolution is set $M=101$.
\textbf{Right:} The dependence of the simulation speed  on the grid size $M\in\{101,201,501,1001\}$ with $T=800$.
}
\label{fig6:FARMA_speed}
\end{figure}

In this example we consider the FARMA(4,3) process \eqref{eq6:FARMA_def} with the autoregressive operators $\mathcal{A}_1,\dots,\mathcal{A}_4$, the moving average operators $\mathcal{B}_1,\dots,\mathcal{B}_3$, and the innovation covariance operator $\mathcal{S}$ defined as integral operators with kernels
\begin{align*}
A_1(x,y) &= 0.3 \sin(x-y), & B_1(x,y) &= x+y, \\
A_2(x,y) &= 0.3 \cos(x-y), & B_2(x,y) &= x, \\ 
A_3(x,y) &= 0.3 \sin(2x),  & B_3(x,y) &= y, \\
A_4(x,y) &= 0.3 \cos(y),   &
\end{align*}
and
\begin{align}\label{eq6:FARMA_lowrank_def_S}
S(x,y) =
           &\sin(  2\pi x )\sin(  2\pi y ) +\\ \nonumber
    + 0.6  &\cos(  2\pi x )\cos(  2\pi y ) +\\ \nonumber
    + 0.3  &\sin(  4\pi x )\sin(  4\pi y ) +\\ \nonumber
    + 0.1  &\cos(  4\pi x )\cos(  4\pi y ) +\\ \nonumber
    + 0.1  &\sin(  6\pi x )\sin(  6\pi y ) +\\ \nonumber
    + 0.1  &\cos(  6\pi x )\cos(  6\pi y ) +\\ \nonumber
    + 0.05 &\sin(  8\pi x )\sin(  8\pi y ) +\\ \nonumber
    + 0.05 &\cos(  8\pi x )\cos(  8\pi y ) +\\ \nonumber
    + 0.05 &\sin( 10\pi x )\sin( 10\pi y ) +\\ \nonumber
    + 0.05 &\cos( 10\pi x )\cos( 10\pi y ), \qquad x,y\in[0,1].
\end{align}
These are depicted in Appendix~\ref{sec:supplementary_figures}, Figure~\ref{fig6:FARMA_kernels}.
The constant $0.3$ guarantees stationarity of the process, hence it admits the spectral density \eqref{eq6:FARMA_spectral_density_operator}. Figure~\ref{fig6:FARMA}, included in Appendix~\ref{sec:supplementary_figures}, confirms that all the simulation methods approximate well the simulated process as the relative simulation error metric is affected more by the stochastic component. Figure~\ref{fig6:FARMA_speed} presents the simulation speed comparison between the spectral domain methods and the time-domain autoregressive recursion approach (\textsc{temporal}). The four considered spectral domain methods include:
\begin{itemize}
\item \textsc{spectral (lr)} 
This method uses the eigendecomposition \eqref{eq6:FARMA_lowrank_def_S} of the innovation noise covariance kernel. The simulation is conducted fully in the spectral domain as explained in Subsection~\ref{subsec6:simulation_FARFIMA}.
\item \textsc{hybrid (lr)}: This method uses the eigendecomposition \eqref{eq6:FARMA_lowrank_def_S} of the innovation noise covariance kernel, simulates the corresponding moving average process in the spectral domain and applies the autoregressive part in the time-domain as explained in Subsection~\ref{subsec6:simulation_FARFIMA}.
\item \textsc{spectral (svd)}, \textsc{hybrid (svd)}: As above, but the eigendecomposition of $S(x,y)$ is calculated numerically by the SVD algorithm.
\end{itemize}
Even though the time complexity, which is dominated by the term $O(M^2T)$, of the spectral domain simulation method matches the time complexity of the \textsc{temporal} domain approach with $O(M^2T)$ complexity, the results presented in Figure~\ref{fig6:FARMA_speed} show that the simulation of the \FARMA{} process in the spectral domain, requiring solving matrix equation at each frequency, as well as the hybrid simulation are slower than the \textsc{temporal} approach.

The low-rank definition of \eqref{eq6:FARMA_lowrank_def_S} does not yield any computational speed-up compared to infinite rank covariance kernels (such as the Brownian motion kernel in Example~\ref{example6:long_range_FARFIMA}). The purpose of such a definition is to allow for easy modification of the code if one wishes to specify the process via its harmonic eigenfunctions.

\section{General Recommendations for Simulations}
\label{sec:general_redommendations}

Our methodology provides a general purpose toolbox for simulating stationary (Gaussian) functional time series, leveraging their spectral representation. The high-level skeleton outlined at the beginning of Section~\ref{sec6:simulation_in_spectral_domain} essentially reduces the problem to simulating a finite ensemble of independent random elements, and then applying the inverse fast Fourier transform. The generation of this i.i.d. ensemble depends on how one chooses to carry out discretisation and/or dimension reduction. We have demonstrated how knowledge of additional structure can significantly speed up the computations.

\medskip
\noindent Some take-away messages and recommendations are as follows. 
\begin{itemize}
\item \textbf{Simulation of functional time series specified through their spectral density operator.}
To date, this problem had not been addressed, presumably because the assessment of the functional time series methods has traditionally been done based on simulation of functional \emph{linear} processes. Key methods pertaining to regression and prediction, however, present performance tradeoffs that depend on the frequency domain properties, rather than the time domain properties of the time series \citep{hormann2015dynamic,hormann2015estimation,hormann2018testing,zhang2016white,tavakoli2016detecting,pham2018methodology,rubin2019functional,rubin2020sparsely}.
One then wishes to simulate from a spectrally specified functional time serirs.  More generally, our method can in principle be applied to any stationary model, linear or nonlinear, going well beyond the classical families of functional \FARMA{} or \FARFIMA{} processes, provided the process admits a weak spectral density operator.

The method is fast and produces accurate results when the process is spectrally specified, courtesy of the Cram\'er-Karhunen-Lo\`{e}ve expansion (Subsection~\ref{subsec6:simulation_CKL}) which is provably the optimal way to carry out dimension reduction. Excellent performance can also be expected when the dynamics of a functional time series are specified by means of white noise filtering (Subsection~\ref{subsec6:simulation_filter}). 
 For a general specification, the spectral domain simulation method of Subsection~\ref{subsec6:simulation_CKL} still provides means how to simulate arbitrary functional time series. If the Cram\'er-Karhunen-Lo\`{e}ve expansion is unknown, or a filtering representation is not available, the spectral density evaluation and the numeric eigendecomposition might require more time-consuming operations.  Still, the approach constitutes the only general purpose recipe, where no previous method was available.
 
\item \textbf{Simulation of \FARFIMA{} processes.}
The advantages of the spectral approach compared to time domain methods become quote considerable when dealing with processes that have an infinite order moving average representation, while having a simple formulation in the spectral domain. An important example being the \FARFIMA{} processes with $d>0$ (long memory process) or $d<0$ (anti-persistent) as the fractional integration is straightforward in the spectral domain while it produces an infinite order dependence in the time-domain. Example~\ref{example6:long_range_FARFIMA} showed how to efficiently and effortlessly simulate a long-range dependent FARFIMA process. Therefore we submit that the simulation of \FARFIMA{} processes with $d\neq 0$ is more accessible and easy to implement in the spectral domain.

\item \textbf{Simulation of \FARMA{} processes.} If one does specifically want to simulate a \FARMA{} processes, simulation in the time-domain is straightforward and fast. Still, our spectral domain simulation method matches the time complexity of the time domain methods in these cases.
The constant hidden in ``$O$", however, seems to be higher for the spectral domain methods, as Example~\ref{example6:FARMA_lowrank} confirms. One advantage that the simulation in the spectral domain attains over the time-domain, though, is that we do not need to worry about the burn-in to reach the stationary distribution. We tentatively conclude that if a practitioner wishes to simulate a \FARMA{} process, then both the time-domain and the spectral domain methods are equally applicable, though the time-domain simulation seems to be more straightforward to implement.
\end{itemize}

Overall  the presented methods provide a useful toolbox of simulation methods in the spectral domain which are fast and accurate, and allow for simulation of standard as well as unusual or ``custom defined" stationary time series defined through their weak spectral density operators. We hope that the accompanying code can be helpful for carrying out numerical experiments in future functional time series methodological research.


\section{Code Availability and \texttt{R} Package \texttt{specsimfts}}
\label{sec:code_availability}

To facilitate the implementation of spectral domain simulation methods introduced in this article, we have created an \texttt{R} package \texttt{specsimfts} available on GitHub at \url{https://github.com/tomasrubin/specsimfts}. The package includes the implementations of all the methods presented in this article as well as the examples considered in Section~\ref{sec6:examples} as demos that are easy to use and modify.

\appendix

\section{Proofs of Formal Statements}
\label{sec:proofs}

\subsection{Proof of Theorem~\ref{thm:FARMA-part-ii}}
\label{subsec:proof_of_thm:FARMA}

\begin{proof}
Denoting $\Delta$ to be the backshift operator, the equation \eqref{eq6:FARMA_def} can be rewritten as 
\begin{equation}\label{eq6:FARMA_filtrations}
\mathscr{A}(\Delta) X_t = \mathscr{B}(\Delta) \epsilon_t.
\end{equation}

We start with the analysis of the moving average part
\begin{equation}\label{eq6:ma_filtration}
\eta_t = \mathscr{B}(\Delta) \epsilon_t.
\end{equation}
The spectral density operator of the white noise process $\{\epsilon_t\}$ is trivially given by $\mathscr{F}^\epsilon_\omega = (2\pi)^{-1} \mathcal{S}$. 
The filter $\mathscr{B}(\Delta)$, whose filter coefficients are given by $\mathscr{B}(\Delta)_s = \mathcal{B}_s$ for $s=0,\dots,q$ and $\mathscr{B}(\Delta)_s = 0$ otherwise, defines the frequency response function $B(\omega) = \mathscr{B}(e^{-\I\omega})$.
Thus, the moving average process $\eta = \{\eta_t\}$ admits the spectral density operator
$$ \mathscr{F}_\omega^\eta = \frac{1}{2\pi} \mathscr{B}(e^{-\I\omega}) \mathcal{S} \mathscr{B}(e^{-\I\omega})^*$$
by Proposition~\ref{prop:filtration_sum_l^1}.
Obviously, the moving average process $\eta = \{\eta_t\}$ is $q$-correlated, i.e. $\mathscr{R}_h^\eta = 0$ for $|h|>q$, and therefore satisfies the conditions \eqref{eq6:weak_dependence_op_norm} and \eqref{eq6:weak_dependence_traces}, and it is easy to verify that $\mathscr{F}^\eta \in L^{\infty}([0,2\pi], \mathcal{L}_1(\HC) )$.

We now wish to invert \eqref{eq6:FARMA_filtrations} and write the process $X$ as
\begin{equation}\label{eq6:FARMA_filtrations_inverted}
X_t = \mathscr{A}^{-1}(\Delta) \left[ \mathscr{B}(\Delta) \epsilon_t \right] = \mathscr{A}^{-1}(\Delta) \eta_t.
\end{equation}

As part of their existence proof, \citet{klepsch2017prediction}[Theorem 3.8] defined a state space process representation of \eqref{eq6:FARMA_def} as a process in the product space $\HR^p$
$$
\underbrace{\begin{bmatrix}
X_t \\ X_{t-1} \\ \vdots \\ X_{t-p+1} 
\end{bmatrix}}_{\Xi_t}
=
\underbrace{
\begin{bmatrix}
\mathcal{A}_1 & \cdots & \mathcal{A}_{p-1} & \mathcal{A}_p \\
\Id &   &  & 0 \\
  & \ddots &  & \vdots \\
 &  & \Id & 0 \\
\end{bmatrix}
}_{\tilde{\mathcal{A}}}
\underbrace{
\begin{bmatrix}
X_{t-1} \\ X_{t-2} \\ \vdots \\ X_{t-p} 
\end{bmatrix}
}_{\Xi_{t-1}}
+
\underbrace{
\begin{bmatrix}
\eta_t \\ 0 \\ \vdots \\ 0
\end{bmatrix}
}_{\tilde{\eta}_t}
,\qquad t\in\mathbb{Z}.
$$
They showed that the process $\Xi$ can be written as
\begin{equation}\label{eq6:FARMA_state_space_solved}
\Xi_t = \sum_{j=0}^\infty \tilde{\mathcal{A}}^j \tilde{\eta}_{t-j},\qquad t\in\mathbb{Z},
\end{equation}
where
\begin{equation}\label{eq6:FARMA_state_space_solved_summability}
\sum_{j=0}^\infty \| \tilde{\mathcal{A}}^j \|_{\mathcal{L}(\HR^p)} < \infty
\end{equation}
by the assumption \eqref{eq6:FARMA_condition}. Set $P_1$ to be the projection operator onto the first component:
\begin{eqnarray*}
P_1: & \HR^p &\to\quad \HR, \\
& (f_1,\dots,f_n) &\mapsto\quad f_1.
\end{eqnarray*}
Applying $P_1$ to \eqref{eq6:FARMA_state_space_solved} yields
$$ X_t = \sum_{j=0}^\infty P_1 \tilde{\mathcal{A}}^j P_1^* \eta_{t-j} $$
which essentially means that the filter $\mathscr{A}(\Delta)^{-1}$ is given by $( \mathscr{A}(\Delta)^{-1})_s = P_1 \tilde{\mathcal{A}}^s P_1^*$ for $s\geq 0$ and zero otherwise. Moreover, \eqref{eq6:FARMA_state_space_solved_summability} implies
\begin{equation}\label{eq6:FARMA_ar_filter_summability}
\sum_{s\in\mathbb{Z}} \left\| [ \mathscr{A}(\Delta)^{-1}]_s \right\|_{\mathcal{L}(\HR)} < \infty.
\end{equation}
Finally, the application of Proposition \ref{prop:filtration_sum_l^1} onto the filter $\mathscr{A}(\Delta)^{-1}$ and functional time series $\eta$ gives us the spectral density of $X$ given by the formula \eqref{eq6:FARMA_spectral_density_operator}. Moreover, because $\eta$ is $q$-correlated, it trivially satisfies the conditions \eqref{eq6:weak_dependence_op_norm} and \eqref{eq6:weak_dependence_traces} with $\mathscr{R}_h^\eta$, therefore the \FARMA{} process $X$ also satisfies the weak dependence conditions  \eqref{eq6:weak_dependence_op_norm} with $\mathscr{R}_h^X$.
\end{proof}

\subsection{Proof of Theorem~\ref{thm:FARFIMA-part-ii}}
\label{subsec:proof_of_thm:FARFIMA}

\begin{proof}

Building upon the results of Theorem~\ref{thm:FARMA-part-i} we write the \FARMA{} process as
$$ X_t = \mathscr{A}(\Delta)^{-1} \mathscr{B}(\Delta) \epsilon_t = \mathscr{A}(\Delta)^{-1} \eta_t $$
where $ \eta_t = \mathscr{B}(\Delta) \epsilon_t $ is the functional moving average process.
Formally inverting the filter \eqref{eq6:FARIMA_def} yields
$$ \tilde{X}_t = (\Id-\Delta)^{-d} X_t = (\Id-\Delta)^{-d}\mathscr{A}(\Delta)^{-1} \eta_t $$

Following the proof of \citet{hosking1981fractional}[Theorem 1],
define the function $c(z) = (1-z)^{-d},\,z\in\mathbb{C}$. Then the power series expansion of $c$ converges for $|z|\leq 1$  as long as $d<1/2$ and we can write $c(z) = \sum_{k=0}^\infty c_k z^k,\,|z|\leq 1$.
Moreover, using the binomial expansion for $(1-z)^{-d}$ it can be shown \citep{hosking1981fractional} that the coefficients satisfy
\begin{equation}\label{eq6:fi_filter_coefficients_decay}
c_k \sim \frac{ k^{d-1} }{ (d-1)! }, \qquad\text{as}\quad k\to\infty.
\end{equation}

Define with the filter $\mathcal{C} = \{ \mathcal{C}_k \}_{k\in\mathbb{Z}}$ with filter coefficients $\mathcal{C}_k = c_k \Id$ for $k\in\mathbb{N}_0$ where $\Id$ is the identity operator on $\HC$, and zero otherwise.
Obviously $\mathcal{C} = (\Id-\Delta)^{-d}$ in the sense of equality of filters. By the asymptotic relation \eqref{eq6:fi_filter_coefficients_decay}, the filter satisfies 
\begin{equation}\label{eq6:fi_filter_summability}
\sum_{k\in\mathbb{Z}}  \| \mathcal{C}_k \|^2_{\mathcal{L}(\HC)}  < \infty.
\end{equation}
The convolution of the filters $\mathcal{C}$ and $\mathscr{A}(\Delta)^{-1}$, denoted as $\mathcal{D} = \mathcal{C} \ast \mathscr{A}(\Delta)^{-1}$, is given by
$$
\mathcal{D}_s =
\begin{cases} 
      \sum_{k=0}^s \mathcal{C}_k  \left[ \mathscr{A}(\Delta)^{-1} \right]_{s-k}, & s\geq 0,\\
      0, & s<0.
\end{cases}
$$
By way of Young's convolution inequality \citep{hewitt2012abstract}[Theorem 20.18], \eqref{eq6:FARMA_ar_filter_summability} and \eqref{eq6:fi_filter_summability} imply
$$ \sum_{k\in\mathbb{Z}} \| \mathcal{D}_k \|^2_{\mathcal{L}(\HC)} < \infty.$$

Because the moving average process $\eta_t$ is $q$-correlated, we apply Proposition~\ref{prop:filtration_sum_l^2} and obtain the existence and stationary of the \FARFIMA{} process defined by the filter
\begin{align*}
\tilde{X}_t &= \mathcal{D} \eta_t \\
&= \mathcal{C} \left[ \mathscr{A}(\Delta)^{-1} \eta_t \right].
\end{align*}
Moreover, the process $\tilde{X}$ admits the weak spectral density $\mathscr{F}^{\tilde{X}} \in L^1( [0,2\pi], \mathcal{L}_1(\HC) )$ given by
\begin{align*}
\mathscr{F}^{\tilde{X}}_\omega &= \frac{1}{2\pi} \mathcal{D}(\omega) \mathscr{F}_\omega^\eta \mathcal{D}(\omega)^* \\
&= \frac{1}{2\pi} c(e^{-\I \omega}) \mathscr{A}(e^{-\I\omega})\mathscr{B}(e^{-\I\omega}) \mathcal{S}
\mathscr{B}(e^{-\I\omega})^* \left[\mathscr{A}(e^{-\I\omega})^*\right]^{-1}  \overline{ c(e^{-\I \omega}) } \\
&= \frac{1}{2\pi} 
\left[ 2\sin\left(\omega/2\right) \right]^{-2d}
\mathscr{A}(e^{-\I\omega})\mathscr{B}(e^{-\I\omega}) \mathcal{S}
\mathscr{B}(e^{-\I\omega})^* \left[\mathscr{A}(e^{-\I\omega})^*\right]^{-1} ,
\end{align*}
for $\omega\in(0,2\pi)$,
where we have used that $c(e^{-\I\omega}) \Id = (1-e^{-\I\omega})^{-d} \Id = \sum_{k=0}^\infty \mathcal{C}_k e^{-\I k\omega}$ is the frequency response function of the filter $\mathcal{C}$ and
$c(e^{-\I\omega}) \overline{c(e^{-\I\omega})} = |1-e^{-\I\omega}|^{-2d} = [ 2\sin(\omega/2) ]^{-2d}$.
\end{proof}

\subsection{Proof of Theorem~\ref{theorem6:abstract_method}}
\label{subsec:proof_of_thm:abstract_method}
\begin{proof}
The Gaussianity, stationarity, and mean-zero properties of $X_1,\dots,X_T$ are clear thanks to linearity.

First we show that the functional time series defined by \eqref{eq6:simulation_iFFT} is real-valued.
For $k=1,\dots,T/2-1$ we have that
$$ Z_k e^{\I t\omega_k} + Z_{T-k} e^{\I t\omega_{T-k}} =
Z_k e^{\I t\omega_k} + \overline{Z_k} e^{-\I t\omega_k} =
2\Re\{ Z_k e^{\I t\omega_k} \} \quad\in\mathbb{R}. $$
For $k=T/2$ or $k=T$, the spectral density operator $\mathscr{F}^X_{\omega_k}$ is real, thus $Z_k$ is real-valued, and $e^{\I t\omega} \in \{-1,1\}$ for $\omega\in\{\pi,2\pi\}$. Therefore \eqref{eq6:simulation_iFFT} defines a real-valued functional time series.

Let us calculate the lag-$h$ autocovariance operators of \eqref{eq6:simulation_iFFT} for $h\in\mathbb{N}$.
\begin{align}
\nonumber
\Ez{ X_{t+h} \otimes X_t }
&= \frac{\pi}{T} \Ez{ \left(\sum_{k=1}^T Z_k e^{\I(t+h)\omega_k}\right) \otimes \left(\sum_{l=1}^T Z_l e^{\I t\omega_{l}}\right) }\\
\label{eq6:proof_theorem6_eq1}
&= \frac{\pi}{T} \sum_{k=1}^T \sum_{l=1}^T \Ez{ Z_k \otimes Z_l } e^{\I (t+h)\omega_k }e^{-\I t\omega_{l}}
\end{align}

We shall calculate the term $\Ez{ Z_k \otimes Z_l }$ on the right-hand side of \eqref{eq6:proof_theorem6_eq1}.
Firstly, $\Ez{ Z_k \otimes Z_k } = 2\mathscr{F}^X_{\omega_k}$ for $k\in \{T/2, T\}$, and $\Ez{ Z_k \otimes Z_l } = 0$ for $k\in \{T/2, T\}$ and $l\neq k$.

Secondly, fix $k\in\{1,\dots, T/2\}$. Then 
\begin{align*}
\Ez{ Z_k \otimes Z_k } &= 
\Ez{ Z_k' \otimes Z_k' + \I Z_k'' \otimes Z_k' - \I Z_k' \otimes Z_k'' + Z_k'' \otimes Z_k''} = 2 \mathscr{F}^X_{\omega_k}, \\
\Ez{ Z_k \otimes Z_{T-k} } &= 
\Ez{ Z_k' \otimes Z_k' + \I Z_k'' \otimes Z_k' + \I Z_k' \otimes Z_k'' - Z_k'' \otimes Z_k''} = \\
&= \Ez{ Z_k' \otimes Z_k' - Z_k'' \otimes Z_k''} = 0.
\end{align*}
Furthermore, for $l \notin \{k, T-k\}$, we have $\Ez{ Z_k \otimes Z_l } = 0$ from the independence of $Z_k$'s.

We continue with the calculations on \eqref{eq6:proof_theorem6_eq1} as
\begin{equation}\label{eq6:proof_theorem6_eq2}
\Ez{ X_{t+h} \otimes X_t } = \frac{2\pi}{T} \sum_{k=1}^T \mathscr{F}^X_{\omega_k} e^{\I h \omega_k}.
\end{equation}

The right-hand side of \eqref{eq6:proof_theorem6_eq2} constitutes the Riemann sum of the integral \eqref{eq6:spectral_density_operator_inverse_formula}. The convergence of the Riemann sums \eqref{eq6:proof_theorem6_eq2}, as $T\to\infty$, towards \eqref{eq6:spectral_density_operator_inverse_formula} is justified by the assumption \ref{item:theorem6:abstract_method:item_ii} $\mathscr{F}^X_\omega \in L^1([0,2\pi],\mathcal{L}_1(\HC))$. The weak-dependence setting under the assumption \ref{item:theorem6:abstract_method:item_i} is only a special case of the latter but we decided to list them side by side for transparency.
\end{proof}

\section{Functional Filters and Frequency Response Functions}

In this appendix we present the framework of linear filters and their spectral analysis. These technical results are important for derivation of the spectral density operators of the FARMA and FARFIMA processes in Subsections~\ref{subsec:FARMA} and \ref{subsec:FARFIMA}.



Let $X=\{X_t\}_{t\in\mathbb{Z}}$ be a mean-zero stationary functional time series in the separable real Hilbert space $\HR$ with the weak spectral density operator
\begin{equation}\label{eq6:appx_FwX_in_Lp}
\mathscr{F}^X \in L^p([0,2\pi], \mathcal{L}_1(\HC))\qquad\text{for some}\quad p\in(1,\infty].
\end{equation}
Its lag-$h$ autocovariance operators $\mathscr{R}^X_h$ satisfy
$$ \mathscr{R}^X_h = \int_0^{2\pi} \mathscr{F}_\omega^X e^{\I h\omega} \D\omega ,\qquad h\in\mathbb{Z}.$$

A \textit{functional filter}, or simply a \textit{filter}, is a sequence of filter coefficients $\{\theta_s\}_{s\in\mathbb{Z}}$ where $\theta_s \in \mathcal{L}( \HR )$.
Formally, define the filtered functional time series $Y=\{Y_t\}_{t\in\mathbb{Z}}$ as
\begin{equation}\label{eq6:appx_filtered_Y}
Y_t = \sum_{s\in\mathbb{Z}} \theta_s X_{t-s},\qquad t\in\mathbb{Z},
\end{equation}
and the \textit{frequency response function} of $\{\theta_s\}$ as
\begin{equation}\label{eq6:appx_frequency_response}
\Theta(\omega) = \sum_{s\in\mathbb{Z}} \theta_s e^{-\I s\omega},\qquad \omega\in[0,2\pi],
\end{equation}
provided \eqref{eq6:appx_filtered_Y} and \eqref{eq6:appx_frequency_response} converge in an appropriate sense which is justified by the following proposition. 

\begin{proposition}\label{prop:filtration_sum_l^1}
Assume \eqref{eq6:appx_FwX_in_Lp} and that the filter $\{\theta_s\}$ satisfies 
$$ \sum_{s\in\mathbb{Z}} \| \theta_s \|_{\mathcal{L}(\HR)} < \infty.$$

Then the sum on the right-hand side of \eqref{eq6:appx_filtered_Y} converges with respect to $\E\|\cdot\|^2$ and $Y=\{Y_t\}_{t\in\mathbb{Z}}$ is a second-order stationary mean-zero functional time series with values in $\HR$.
Moreover, the sum on the right-hand side of \eqref{eq6:appx_frequency_response} converges in $\mathbb{M}$ (defined in Section~\ref{subsec:CKL})
and the weak spectral density operator $\mathscr{F}^Y \in L^1( [0,2\pi], \mathcal{L}_1(\HC) )$ 
of the functional time series $Y=\{Y_t\}_{t\in\mathbb{Z}}$ is given by
\begin{equation}\label{eq6:appx_filtered_Y_spec_density}
\mathscr{F}^Y_\omega = \Theta(\omega) \mathscr{F}^X_\omega \Theta(\omega)^*, \qquad \omega\in[0,2\pi]
\end{equation}
and the lag-$h$ autocovariance operators of $Y$ are given by
\begin{equation}\label{eq6:appx_filtered_Y_autocovariance}
\mathscr{R}_h^Y = \Ez{ Y_h \otimes Y_0 } = \int_0^{2\pi} \mathscr{F}_\omega^Y e^{\I t\omega}\D\omega, \qquad h\in\mathbb{Z}.
\end{equation}

Furthermore, if \eqref{eq6:weak_dependence_op_norm} and \eqref{eq6:weak_dependence_traces} hold for the time series $X$,
then $\sum_{h\in\mathbb{Z}} \| \mathscr{R}_h^Y \|_{\mathcal{L}(\HR)} < \infty$.
\end{proposition}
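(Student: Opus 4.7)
The plan is to establish the four claims (convergence of the time-series filter, convergence of the frequency response function in $\mathbb{M}$, identification of $\mathscr{F}^Y$, and summability of $\mathscr{R}_h^Y$) sequentially, using truncated partial sums and passing to the limit with dominated-convergence arguments.

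\textbf{Step 1: Convergence of $Y_t$ in $\E\|\cdot\|^2$ and stationarity.} For $M>N$, I would estimate the partial sum tail by weighted Cauchy--Schwarz,
\[
\Ez{\Bigl\| \sum_{N<|s|\leq M}\theta_s X_{t-s}\Bigr\|^2} \leq \Bigl(\sum_{s\in\mathbb{Z}}\|\theta_s\|_{\mathcal{L}(\HR)}\Bigr)\Bigl(\sum_{N<|s|\leq M}\|\theta_s\|_{\mathcal{L}(\HR)}\Bigr)\Ez{\|X_0\|^2},
\]
noting that $\Ez{\|X_0\|^2}=\tr(\mathscr{R}^X_0)\le \int_0^{2\pi}\|\mathscr{F}^X_\omega\|_1\D\omega<\infty$ since $\mathscr{F}^X\in L^p\subset L^1$ on a bounded interval. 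The partial sums are therefore Cauchy in $L^2(\Omega,\HR)$ and $Y_t$ is well-defined. Mean-zero, stationarity and the closed form for $\mathscr{R}^Y_h$ below are inherited from the truncated partial sums, since each $Y_t^{(N)}$ is a finite linear combination of stationary mean-zero elements.

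\textbf{Step 2: Convergence of $\Theta(\omega)$ in $\mathbb{M}$.} The absolute summability of $\|\theta_s\|_{\mathcal{L}(\HR)}$ yields
$\sup_{\omega\in[0,2\pi]}\|\Theta_N(\omega)-\Theta(\omega)\|_{\mathcal{L}(\HC)}\leq \sum_{|s|>N}\|\theta_s\|_{\mathcal{L}(\HR)}\to 0.$
Using the standard bound $|\tr(AFA^*)|\leq \|A\|_{\mathcal{L}(\HC)}^2\|F\|_1$ for positive trace-class $F$, one obtains
\[
\|\Theta_N-\Theta\|_\mathbb{M}^2 \leq \sup_\omega\|\Theta_N(\omega)-\Theta(\omega)\|_{\mathcal{L}(\HC)}^2 \int_0^{2\pi}\|\mathscr{F}^X_\omega\|_1\D\omega \;\longrightarrow\; 0.
\]
In particular $\Theta\in\mathbb{M}$ since the partial sums lie in $L^{2q}([0,2\pi],\mathcal{L}(\HC))\subset\mathbb{M}$.

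\textbf{Step 3: Identifying $\mathscr{F}^Y_\omega$ via truncated filters.} For each $N$, a direct computation gives
\[
\mathscr{R}^{Y^{(N)}}_h \;=\; \sum_{|s|,|s'|\leq N}\theta_s\,\mathscr{R}^X_{h-s+s'}\,\theta_{s'}^*
\;=\; \int_0^{2\pi} e^{\I h\omega}\,\Theta_N(\omega)\,\mathscr{F}^X_\omega\,\Theta_N(\omega)^*\,\D\omega,
\]
where the second equality uses the inversion formula for $\mathscr{R}^X_h$ and linearity. The left-hand side converges to $\mathscr{R}^Y_h$ in $\mathcal{L}(\HR)$ because $Y_t^{(N)}\to Y_t$ in $L^2$; the right-hand side converges by dominated convergence, using the uniform operator-norm convergence $\Theta_N\to\Theta$ and the integrable dominating function $(\sum_s\|\theta_s\|)^2\|\mathscr{F}^X_\omega\|_1$. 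This yields the inversion formula \eqref{eq6:appx_filtered_Y_autocovariance} with integrand $\Theta(\omega)\mathscr{F}^X_\omega\Theta(\omega)^*$, which lies in $L^1([0,2\pi],\mathcal{L}_1(\HC))$ for the same reason. By uniqueness of the weak spectral density, $\mathscr{F}^Y_\omega=\Theta(\omega)\mathscr{F}^X_\omega\Theta(\omega)^*$.

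\textbf{Step 4: Summability under weak dependence.} If $\sum_h\|\mathscr{R}^X_h\|_{\mathcal{L}(\HR)}<\infty$, I would use the representation from Step 3 to bound
\[
\sum_{h\in\mathbb{Z}}\|\mathscr{R}^Y_h\|_{\mathcal{L}(\HR)} \;\leq\; \sum_{h\in\mathbb{Z}}\sum_{s,s'\in\mathbb{Z}}\|\theta_s\|\,\|\mathscr{R}^X_{h-s+s'}\|\,\|\theta_{s'}\|
\;=\; \Bigl(\sum_{s}\|\theta_s\|\Bigr)^2\sum_{h}\|\mathscr{R}^X_h\|_{\mathcal{L}(\HR)},
\]
after Fubini and a shift of summation index. (The representation $\mathscr{R}^Y_h=\sum_{s,s'}\theta_s\mathscr{R}^X_{h-s+s'}\theta_{s'}^*$ follows by taking $N\to\infty$ in Step 3, the double series being absolutely summable in $\mathcal{L}(\HR)$.)

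The main technical obstacle is Step 3: justifying both limits simultaneously (the $L^2$-convergence of $Y_t^{(N)}\to Y_t$ giving convergence of the autocovariance operators, and the dominated convergence inside the spectral integral) so that the explicit sum-integral identity for the truncated filter passes cleanly to the limit. The hypothesis $\mathscr{F}^X\in L^p$ with $p>1$ (hence in $L^1$ on a bounded interval) is precisely what supplies the integrable dominating function.
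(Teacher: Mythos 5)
Your argument is correct, but it is necessarily a different route from the paper's, because the paper does not prove this proposition at all: it simply states that the result is a simplified version of Theorem 2.5.5 and Remark 2.5.6 of Tavakoli (2014) and defers entirely to that reference. What you supply is a self-contained truncation argument, and each step holds up: the weighted Cauchy--Schwarz bound in Step 1 correctly shows the partial sums are Cauchy in $L^2(\Omega,\HR)$ (using $\E\|X_0\|^2=\tr(\mathscr R_0^X)=\int_0^{2\pi}\|\mathscr F^X_\omega\|_1\,\D\omega<\infty$); the bound $\tr(AFA^*)\le\|A\|_{\mathcal L(\HC)}^2\|F\|_1$ in Step 2 is the right tool, and uniform convergence of the trigonometric partial sums indeed gives convergence in $\mathbb M$; the double limit in Step 3 is handled properly, with the $L^2$-convergence of $Y_t^{(N)}$ controlling the left-hand side and the dominating function $(\sum_s\|\theta_s\|)^2\|\mathscr F^X_\omega\|_1$ controlling the spectral integral, after which the almost-everywhere uniqueness of the weak spectral density (which the paper records) identifies $\mathscr F^Y_\omega=\Theta(\omega)\mathscr F^X_\omega\Theta(\omega)^*$; and Step 4 is a clean Tonelli-plus-reindexing estimate that, as you implicitly noticed, only needs the operator-norm summability \eqref{eq6:weak_dependence_op_norm} and not \eqref{eq6:weak_dependence_traces} for the stated conclusion. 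What your approach buys is a proof readable without access to an unpublished thesis; what the paper's citation buys is brevity and access to the more general machinery (e.g.\ square-summable filters under $L^p$ integrability of the spectral density, as in Proposition~\ref{prop:filtration_sum_l^2}) that your absolute-summability argument does not cover.
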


\begin{proposition}\label{prop:filtration_sum_l^2}
Assume that the functional time series $X$ is $m$-correlated for some $m\in\mathbb{N}$, i.e. $\mathscr{R}^X_h = 0$ for $|h|>m$, and the filter $\{\theta_s\}$ satisfies 
$$ \sum_{s\in\mathbb{Z}} \| \theta_s \|_{\mathcal{L}(\HR)}^2 < \infty.$$
Then the sum on the right-hand side of \eqref{eq6:appx_filtered_Y} converges with respect to $\E\|\cdot\|^2$ and $Y=\{Y_t\}_{t\in\mathbb{Z}}$ is a second-order stationary mean-zero functional time series with values in $\HR$, the right-hand side of \eqref{eq6:appx_frequency_response} converges in $\mathbb{M}$,
the weak spectral density operator $\mathscr{F}^Y \in L^1( [0,2\pi], \mathcal{L}_1(\HC) )$ is given by \eqref{eq6:appx_filtered_Y_spec_density} and the inverse formula \eqref{eq6:appx_filtered_Y_autocovariance} holds.
\end{proposition}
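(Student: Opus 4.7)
The plan is to exploit the strong restriction that $X$ is $m$-correlated: then $\mathscr{F}^X_\omega = (2\pi)^{-1}\sum_{|h|\le m}\mathscr{R}^X_h e^{-\I h\omega}$ is a finite operator-valued trigonometric polynomial, so $\|\mathscr{F}^X_\omega\|_1$ is uniformly bounded on $[0,2\pi]$ and $\mathscr{F}^X\in L^\infty([0,2\pi],\mathcal{L}_1(\HC))$. Consequently, in the definition of the space $\mathbb{M}$ from Section~\ref{subsec:CKL}, we may take $p=\infty$ and $q=1$, making $\mathbb{M}$ the completion of $L^2([0,2\pi],\mathcal{L}(\HC))$ with respect to $\|\cdot\|_{\mathbb{M}}$. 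The overall strategy is to truncate the filter as $\theta^{(N)}_s=\theta_s\1\{|s|\le N\}$, apply Proposition~\ref{prop:filtration_sum_l^1} to each finite filter (which is trivially $\ell^1$), and then pass to the limit as $N\to\infty$ using the $\ell^2$ summability of $\{\theta_s\}$.

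The key covariance estimate underlying every step comes from a direct computation: for any finite $S\subset\mathbb{Z}$,
$$\Ez{\Bigl\|\sum_{s\in S}\theta_s X_{t-s}\Bigr\|^2}=\sum_{|h|\le m}\sum_{s,\,s+h\in S}\tr\bigl(\theta_s\mathscr{R}^X_h\theta_{s+h}^*\bigr),$$
and by the trace inequality $|\tr(ABC)|\le\|A\|_{\mathcal{L}(\HR)}\|B\|_1\|C\|_{\mathcal{L}(\HR)}$ together with Cauchy-Schwarz this is bounded by $(2m+1)\max_{|h|\le m}\|\mathscr{R}^X_h\|_1\sum_{s\in S}\|\theta_s\|^2_{\mathcal{L}(\HR)}$. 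Taking $S$ outside a growing finite window yields Cauchy convergence in $L^2(\Omega,\HR)$, defining $Y_t$; a parallel computation for $\E[Y_{t+h}\otimes Y_t]$ gives stationarity. Exactly the same algebra applied to the partial sums $\Theta_N(\omega)=\sum_{|s|\le N}\theta_s e^{-\I s\omega}$, using orthogonality of the exponentials to collapse the $\omega$-integration against $\mathscr{F}^X$, gives
$$\|\Theta_N-\Theta_M\|^2_{\mathbb{M}}=\sum_{|h|\le m}\sum_{s,\,s+h\in S_{N,M}}\tr\bigl(\theta_s\mathscr{R}^X_h\theta_{s+h}^*\bigr),$$
where $S_{N,M}=\{s:M<|s|\le N\}$, so $\{\Theta_N\}$ is Cauchy in $\mathbb{M}$ and converges to some $\Theta\in\mathbb{M}$.

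For the spectral-density formula, set $\mathscr{F}^Y_\omega:=\Theta(\omega)\mathscr{F}^X_\omega\Theta(\omega)^*$, understood via the $\mathbb{M}$-limit as discussed below. By positivity and the identity $\|\Theta\|^2_{\mathbb{M}}=\int_0^{2\pi}\tr(\mathscr{F}^Y_\omega)\D\omega$, one gets $\mathscr{F}^Y\in L^1([0,2\pi],\mathcal{L}_1(\HC))$. The inversion formula is then obtained by approximation: Proposition~\ref{prop:filtration_sum_l^1} applies to the truncated filter $\{\theta^{(N)}_s\}$ and yields $\mathscr{R}^{Y^{(N)}}_h=\int_0^{2\pi}\Theta_N(\omega)\mathscr{F}^X_\omega\Theta_N(\omega)^*e^{\I h\omega}\D\omega$. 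As $N\to\infty$, the left-hand side converges to $\mathscr{R}^Y_h$ in trace norm by the $L^2(\Omega,\HR)$ convergence $Y^{(N)}_t\to Y_t$ established in the previous paragraph, and the right-hand side converges by continuity in $\mathbb{M}$ of the bilinear pairing $(A,B)\mapsto\int_0^{2\pi}A\,\mathscr{F}^X_\omega B^* e^{\I h\omega}\D\omega$.

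The principal obstacle will be giving rigorous meaning to $\Theta(\omega)\mathscr{F}^X_\omega\Theta(\omega)^*$ and to the inversion integrals when $\Theta$ lives only as an equivalence class in the completion $\mathbb{M}$ and need not be defined pointwise in $\omega$. I plan to sidestep pointwise evaluation by viewing $(A,B)\mapsto\int_0^{2\pi}A\,\mathscr{F}^X_\omega B^* e^{\I h\omega}\D\omega$ as a continuous bilinear map into $\mathcal{L}_1(\HC)$, bounded in trace norm by $\|A\|_{\mathbb{M}}\|B\|_{\mathbb{M}}$ via Cauchy-Schwarz for the $\mathbb{M}$-inner product, so that $\int_0^{2\pi}\Theta_N\mathscr{F}^X\Theta_N^* e^{\I h\omega}\D\omega$ converges in $\mathcal{L}_1(\HC)$ to a quantity depending only on the $\mathbb{M}$-class of $\Theta$ and which, at $h=0$, integrates $\mathscr{F}^Y$ in the sense of the inversion formula.
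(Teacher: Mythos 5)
Your argument is correct, but it is genuinely different from what the paper does: the paper offers no proof at all for Propositions~\ref{prop:filtration_sum_l^1} and \ref{prop:filtration_sum_l^2}, simply deferring to \citet{tavakoli2014}[Theorem 2.5.5, Remark 2.5.6], whereas you give a self-contained construction. Your route exploits the $m$-correlated hypothesis twice in an essential way: first to get $\mathscr{F}^X\in L^\infty([0,2\pi],\mathcal{L}_1(\HC))$ as a finite trigonometric polynomial (so that $p=\infty$, $q=1$ and $\mathbb{M}$ is the completion of $L^2([0,2\pi],\mathcal{L}(\HC))$), and second to reduce the double sum in $\Ez{\|\sum_{s\in S}\theta_s X_{t-s}\|^2}$ to a band of width $2m+1$, which together with $|\tr(ABC)|\le\|A\|_{\mathcal{L}(\HR)}\|B\|_1\|C\|_{\mathcal{L}(\HR)}$ and Cauchy--Schwarz converts $\ell^2$ summability of $\{\theta_s\}$ into Cauchy convergence both in $L^2(\Omega,\HR)$ and in $\mathbb{M}$; the identity $\|\Theta_N-\Theta_M\|^2_{\mathbb{M}}=\sum_{|h|\le m}\sum_{s,s+h\in S_{N,M}}\tr(\theta_s\mathscr{R}^X_h\theta_{s+h}^*)$ is exactly right. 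The truncate-and-pass-to-the-limit step, applying Proposition~\ref{prop:filtration_sum_l^1} to $\theta^{(N)}_s=\theta_s\1\{|s|\le N\}$ and using trace-norm continuity of $(u,v)\mapsto\Ez{u\otimes v}$ on one side and the $\mathbb{M}$-continuous bilinear pairing $(A,B)\mapsto\int_0^{2\pi}A\,\mathscr{F}^X_\omega B^*e^{\I h\omega}\D\omega$ on the other, is sound, and you correctly identify and neutralise the one delicate point, namely that $\Theta$ lives only as an equivalence class in the completion $\mathbb{M}$ so that $\Theta(\omega)\mathscr{F}^X_\omega\Theta(\omega)^*$ must be interpreted through the limit of $\Theta_N\mathscr{F}^X\Theta_N^*$ rather than pointwise; this is in the spirit of how the cited general result is actually established. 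What your approach buys is transparency and elementarity under the stronger $m$-correlated hypothesis; what the paper's citation buys is brevity and coverage of the general case where $X$ is not finitely correlated and only $\mathscr{F}^X\in L^p$ for some $p\in(1,\infty]$ is assumed.
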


\begin{proof}[Proof of propositions \ref{prop:filtration_sum_l^1} and \ref{prop:filtration_sum_l^2}]
The stated results are a simplified version of \citet{tavakoli2014}[Theorem 2.5.5, Remark 2.5.6].
\end{proof}

\section{Supplementary Figures for Examples \ref{example6:long_range_FARFIMA} and \ref{example6:FARMA_lowrank}}
\label{sec:supplementary_figures}

Figure~\ref{fig6:FARIMA_trajectories} displays the trajectories of the FARFIMA(1, 0.2, 0) process simualted in Example \ref{example6:long_range_FARFIMA} while Figures \ref{fig6:FARIMA_kernels} and \ref{fig6:FARMA_kernels} depict the kernels of the integral operators used in Examples \ref{example6:long_range_FARFIMA} and \ref{example6:FARMA_lowrank}.
Figures~\ref{fig6:FARFIMA} and \ref{fig6:FARMA} illustrate the results on  simulation accuracy discussed in Examples~\ref{example6:long_range_FARFIMA} and \ref{example6:FARMA_lowrank}.

\begin{figure}[hb]
\centering
\includegraphics[width=1\textwidth]{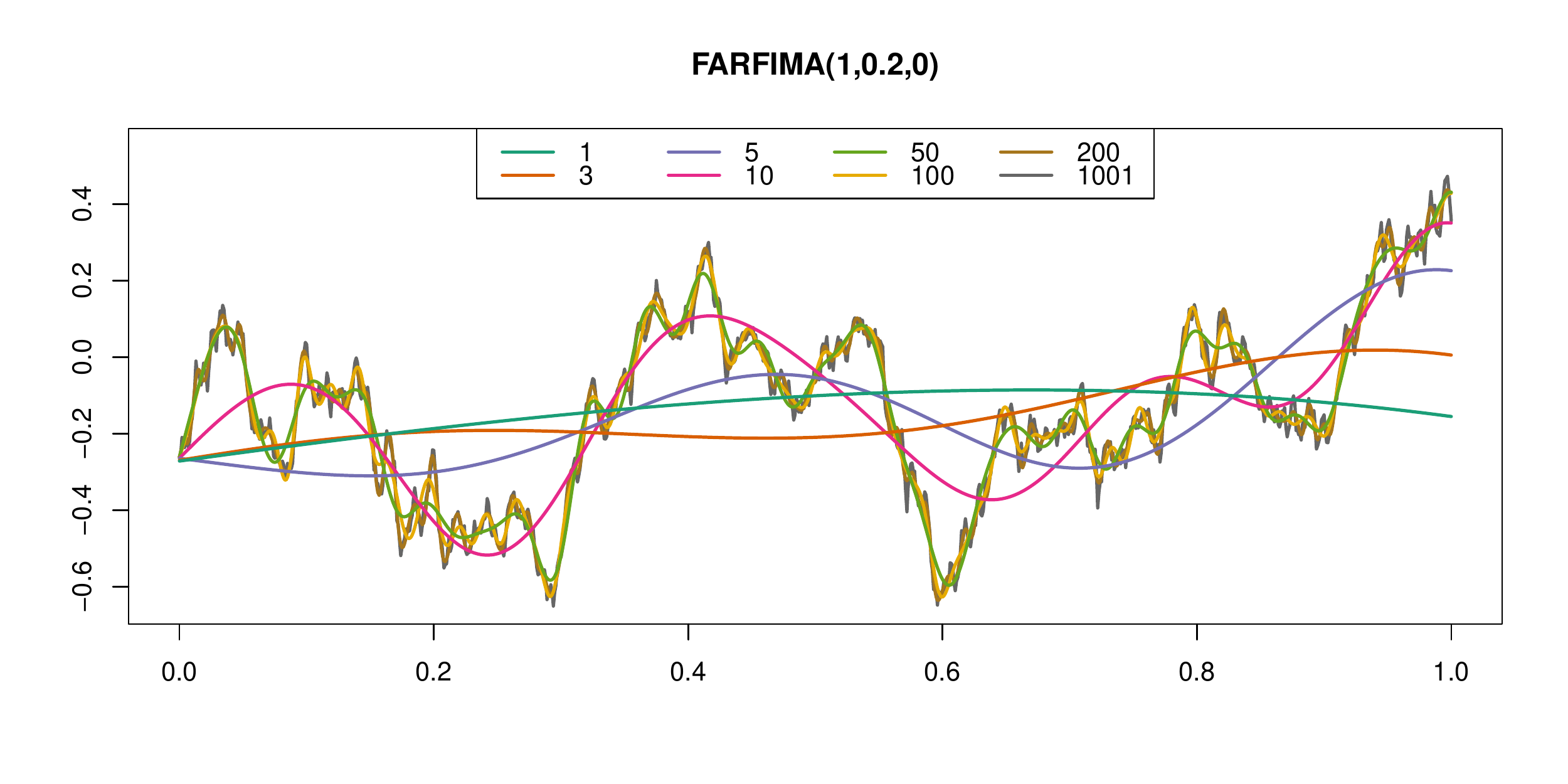}
\caption[Simulated trajectories of the FARFIMA(1,0.2,0) in Example~\ref{example6:long_range_FARFIMA}]{
Sample trajectories $X_1(\cdot)$ of the \textbf{long-range dependent FARFIMA(1,0.2,0)} process defined in Example~\ref{example6:long_range_FARFIMA} with varying number of $N$ chosen in the truncation of \eqref{eq6:simulation_filtered_Y_k}. Simulated with $T=100$ and the grid resolution $M=1001$.
}
\label{fig6:FARIMA_trajectories}
\end{figure}

\begin{figure} 
\centering
\includegraphics[width=0.7\textwidth]{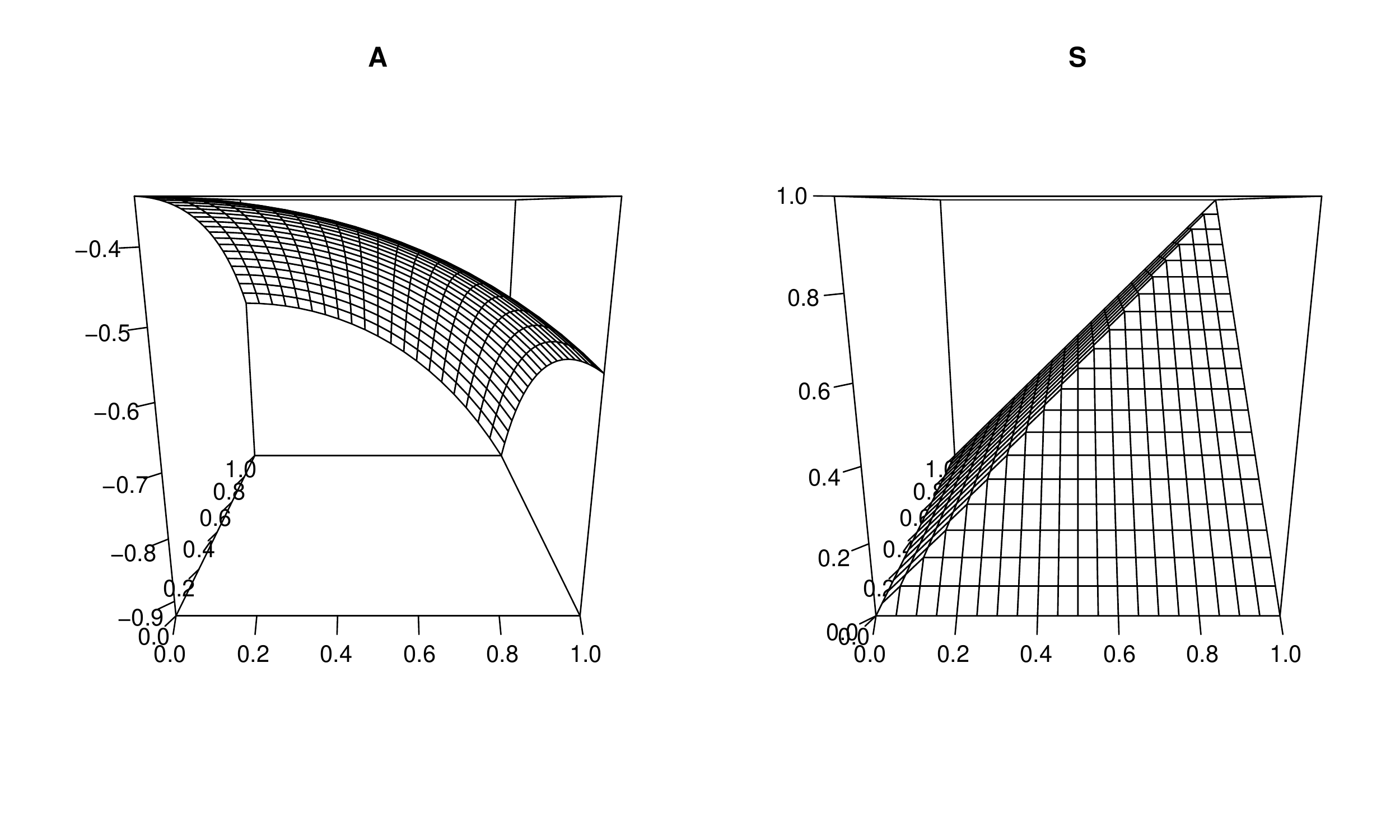}
\caption[The kernels of the functional parameters of the FARFIMA(1,0.2,0) process of Example~\ref{example6:long_range_FARFIMA}]{The kernels of the autoregressive operator and the innovation covariance operator for the FARFIMA(1,0.2,0) process scrutinized in Example~\ref{example6:long_range_FARFIMA}.
}
\label{fig6:FARIMA_kernels}
\end{figure}

\begin{figure}
\centering
\includegraphics[width=1\textwidth]{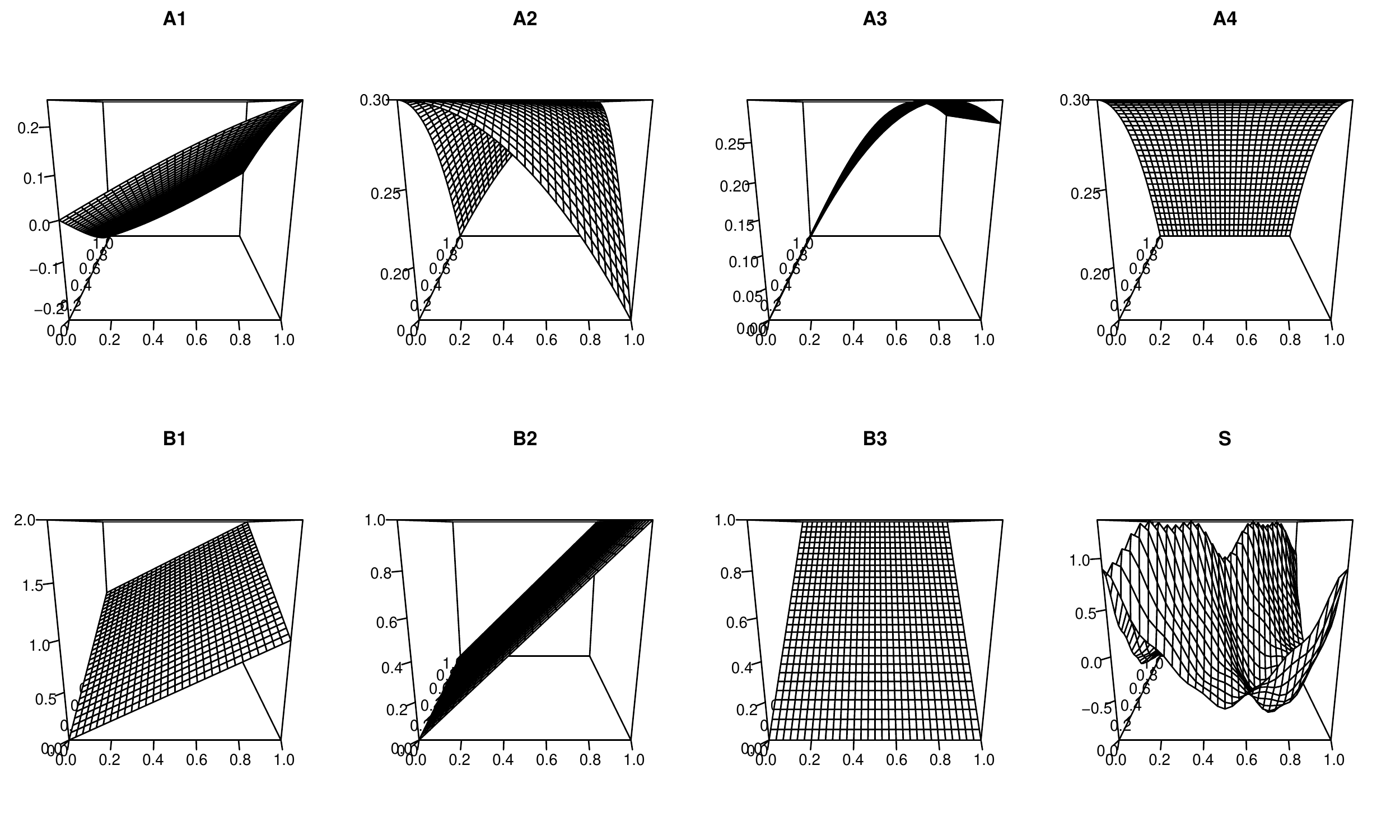}
\caption[The kernels of the functional parameters of the FARMA(4,3) process in Example~\ref{example6:FARMA_lowrank}]{The kernels of the autoregressive operators, moving average operators, and the innovation covariance operator for the FARMA(4,3) process scrutinized in Example~\ref{example6:FARMA_lowrank}.
}
\label{fig6:FARMA_kernels}
\end{figure}

\begin{figure}
\centering
\includegraphics[width=1\textwidth]{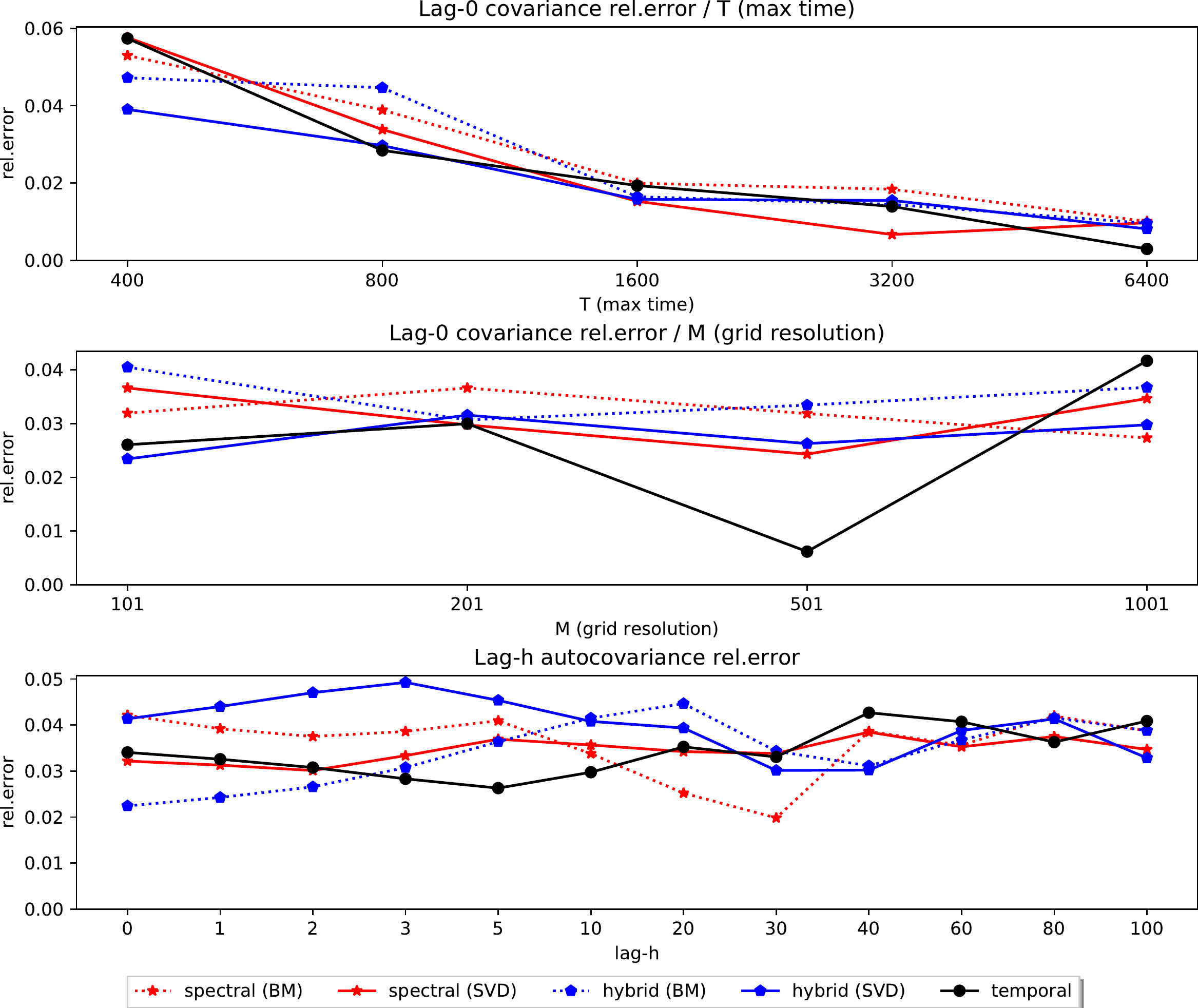}
\caption[The simulation accuracy for the FARFIMA(1,0.2,0) in Example~\ref{example6:long_range_FARFIMA} and its dependence on the simulation parameters]{
The dependence of the \textbf{simulation accuracy} (relative error defined in \eqref{eq6:RMSE_simulation}) for the long-range dependent FARFIMA(1,0.2,0) process defined in Example~\ref{example6:long_range_FARFIMA} on simulation parameters
\textbf{Top:} The dependence of the lag-$0$ covariance operator simulation accuracy on time horizon $T\in\{400,800,1600,3200,6400\}$, while the spatial resolution is set $M=101$.
\textbf{Center:} the dependence of the lag-$0$ covariance operator simulation accuracy on the grid size $M\in\{101,201,501,1001\}$, while the the time horizon is set $T=800$.
\textbf{Bottom:} the dependence of the lag-$h$ covariance operator simulation accuracy on $h\in\{0,1,2,3,5,10,20,30,40,60,80,100\}$, with $T=800$ and $M=101$.
}
\label{fig6:FARFIMA}
\end{figure}

\begin{figure}
\centering
\includegraphics[width=1\textwidth]{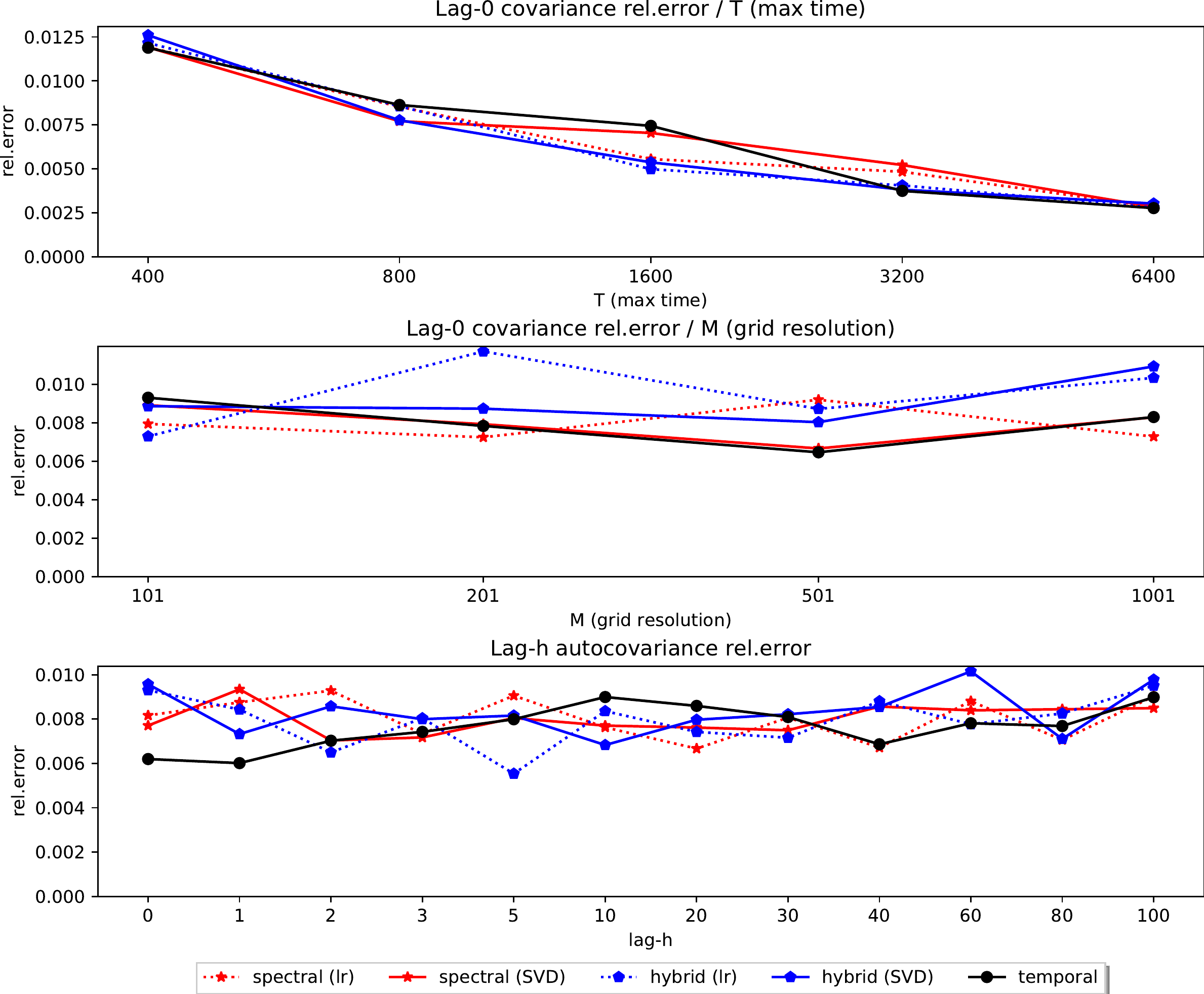}
\caption[The simulation accuracy for the FARMA(4,3) in Example~\ref{example6:FARMA_lowrank} and its dependence on the simulation parameters]{
The dependence of the \textbf{simulation accuracy} (relative error defined in \eqref{eq6:RMSE_simulation}) for the FARMA(4,3) process defined in Example~\ref{example6:FARMA_lowrank} on simulation parameters
\textbf{Top:} The dependence of the lag-$0$ covariance operator simulation accuracy on time horizon $T\in\{400,800,1600,3200,6400\}$, while the spatial resolution is set $M=101$.
\textbf{Center:} the dependence of the lag-$0$ covariance operator simulation accuracy on the grid size $M\in\{101,201,501,1001\}$, while the the time horizon is set $T=800$.
\textbf{Bottom:} the dependence of the lag-$h$ covariance operator simulation accuracy on $h\in\{0,1,2,3,5,10,20,30,40,60,80,100\}$, with $T=800$ and $M=101$.
}
\label{fig6:FARMA}
\end{figure}


\FloatBarrier

\bibliographystyle{plainnat}
\bibliography{biblio}

\end{document}